\newtheorem{theorem}{Theorem}[section]
\newtheorem{lemma}[theorem]{Lemma}
\newtheorem*{MT}{Main Theorem}
\theoremstyle{definition}
\theoremstyle{remark}
\numberwithin{equation}{section}
\newcommand{\comments}[1]{}
\begin{document}

\title{Complex-Time Singularity and Locality Estimates for Quantum Lattice Systems}

\author{Gabriel Bouch}

\address{Department of Mathematics, Rutgers University, Piscataway, New Jersey 08854}

\begin{abstract}
We present and prove a well-known locality bound for the complex-time dynamics
of a general class of one-dimensional quantum spin systems. Then we discuss how one might hope
to extend this same procedure to higher dimensions using ideas related to the Eden growth process
and lattice trees. Finally, we demonstrate with a specific family of lattice trees in the plane 
why this approach breaks down in dimensions greater than one and prove that there exist interactions for which the complex-time dynamics blows-up in finite imaginary time.
\end{abstract}

\maketitle

\section{Introduction}
When $\Lambda$ is a finite, connected subset of $\mathbb{Z}^{d}$, the time evolution of an observable $A$ is given by

\begin{equation}
 \tau^{\Lambda}_{t}(A)=e^{itH_{\Lambda}}Ae^{-itH_{\Lambda}}\, ,
\end{equation}
where $H_{\Lambda}$ is a bounded self-adjoint operator determined by a (typically) translation-invariant interaction. Clearly, $\tau^{\Lambda}_{t}(A)$ is an entire analytic function of $t$. If we take the limit $\Lambda \to \mathbb{Z}^{d}$, is the resulting function entire? If not, what is the nature of the singularities?\\

More than four decades ago, Araki \cite{Araki} established that, for a general class of one-dimensional interactions, the infinite volume time-evolution of a local observable is entire analytic in the time variable. In fact, Araki's work established a locality principle for the \textit{complex}-time evolution of local observables: the support of a local observable stays bounded (up to a small correction) as it evolves in complex-time no matter how large the system is. In \cite{LenciReyBellet} Lenci and Rey-Bellet use this work of Araki to obtain upper bounds on large deviations in quantum lattice systems. More recently, Ogata \cite{Ogata}, again relying heavily on Araki, has established a more complete large deviations principle for such systems. In a different application, T. Matsui, \cite{Matsui1} and \cite{Matsui2}, has used Araki's work to prove several central limit theorems for one-dimensional quantum spin systems. If Araki's results could be extended to dimensions greater than one, some of these applications would immediately generalize.\\

Coming from the opposite dimensional extreme, V. E. Zobov points out in \cite{Zobov} that he earlier \cite{Zobov2} ``established that the autocorrelation function of the Heisenberg magnet on an infinite-dimensional lattice at an infinite temperature has singular points on the imaginary time axis at a finite distance from the origin.'' But then he adds, \textit{``No rigorous results are known for systems of an arbitrary dimension $d$''} (emphasis added). So although something has been known about locality in dimension one and singularities in infinite dimensions for quite a while, general results in finite dimensions greater than one are completely lacking.\\

Araki's result for one-dimensional systems is similar to what are now referred to as Lieb-Robinson bounds \cite{LiebRobinson}. The primary difference is that in complex-time, the support of the observable may grow exponentially in the magnitude of the complex-time variable rather than linearly as is the case for the real-time dynamics. In a recent review of locality results for quantum spin systems, Nachtergaele and Sims \cite{NachSims} mention that it would be interesting if further progress could be made on complex-time locality results. After briefly discussing Araki's result, they make a comparison with the stochastic dynamics of some classical particle systems and hold out hope that this exponential growth in $|z|$ might be improved upon in at least some physically interesting systems (including, presumably, systems in dimensions greater than one).\\

In a novel application of a very simple complex-time Lieb-Robinson type bound, M. Hastings \cite{Hastings} recently gave quantitative bounds on a very interesting question concerning almost commuting Hermitian matrices. After his short proof of the Lieb-Robinson type bound he needs, Hastings remarks: "The proof of this Lieb-Robinson bound is significantly simpler than the proofs of the corresponding bounds for many-body systems considered elsewhere. The power series technique used here does not work for such systems." Although the power series technique does not give general locality results with supports growing \textit{linearly} (in the magnitude of the complex time), it \textit{does} work to prove Araki's result in one-dimension. And it is not at all obvious that the power series technique will \textit{not} work to prove similar results in higher dimensions.\\

In this work we shall demonstrate through the construction of a specific example that general complex-time locality results do \textit{not} hold in dimensions greater than one, and that the complex-time dynamics can blow-up in finite imaginary time. We have the following:

\begin{MT}\label{MainTheorem}
There exists a translation-invariant nearest-neighbor interaction on $\mathbb{Z}^{2}$, with the interaction between nearest-neighbor sites $H_{x_{1},x_{2}}$ satisfying $\left\Vert H_{x_{1},x_{2}} \right\Vert = 1$, an increasing sequence of square sublattices $\{ \Lambda_{j} \}$, and an observable $A$ supported at the origin such that
\begin{equation}
 \displaystyle{\lim_{j \to \infty}   \left\lVert  e^{izH_{\Lambda_{j}}}Ae^{-izH_{\Lambda_{j}}}   \right\rVert   = \infty}
\end{equation}
for $z$ purely imaginary and $|z| > 4^{21}$.
\end{MT}

\section{Quantum Spin Systems}
We consider quantum systems defined on finite subsets $\Lambda$ of $\mathbb{Z}^{d}$. To each site $x \in \Lambda$ we associate an $m$-dimensional Hilbert space $\mathcal{H}_{x}$. The Hilbert space of states is given by $\mathcal{H}_{\Lambda}=\bigotimes_{x \in \Lambda} \, \mathcal{H}_{x}$. For each site $x$, the observables are the complex $m \times m$ matrices, $M_{m}$. The algebra of observables for the whole system is $\mathcal{A}_{\Lambda}=\bigotimes_{x \in \Lambda} \, M_{m}$. If $X \subset \Lambda$, then, by identifying $A \in \mathcal{A}_{X}$ with $A \, \otimes \, \mathbbm{1} \in \mathcal{A}_{\Lambda}$, we have $\mathcal{A}_{X} \subset \mathcal{A}_{\Lambda}$. The support of an observable $A \in \mathcal{A}_{\Lambda}$ is the minimal set $X \subset \Lambda$ for which $A=A' \, \otimes \, \mathbbm{1}$ with $A' \in \mathcal{A}_{X}$. We can also consider the normed algebra $\cup_{\Lambda} \mathcal{A}_{\Lambda}$, where the union is taken over all finite subsets of $\mathbb{Z}^{d}$. We define the algebra of \textit{quasi-local} observables, $\mathcal{A}$, to be the norm completion of this normed algebra. If an element $A \in \mathcal{A}$ is in some $\mathcal{A}_{\Lambda}$, then we say that $A$ is a \textit{local} observable.
\\

An interaction $\Phi$ is a map from the finite subsets of $\mathbb{Z}^{d}$ to $\cup_{\Lambda} \mathcal{A}_{\Lambda}$ such that $\Phi (X) \in \mathcal{A}_{X}$ and $\Phi (X)=\Phi (X)^{*}$ for all finite $X \subset \mathbb{Z}^{d}$. The range of an interaction is defined to be the smallest $R>0$ such that $\Phi (X)=0$ when $\text{diam}(X) > R$. A quantum spin model is defined by a family of local Hamiltonians, parametrized by finite subsets $\Lambda \subset \mathbb{Z}^{d}$, given by

\begin{equation}
 H^{\Phi}_{\Lambda}=\sum_{X \subset \Lambda} \Phi(X).
\end{equation}
\\
We will consistently suppress the $\Phi$ in this notation. The complex-time evolution generated by a quantum spin model, $\{ \tau^{\Lambda}_{z}\}_{z \in \mathbb{C}}$, is defined by

\begin{equation}
 \tau^{\Lambda}_{z}(A)=e^{izH_{\Lambda}}Ae^{-izH_{\Lambda}}, \hspace{5 mm} A \in \mathcal{A}_{\Lambda}.
\end{equation}
\\
We will focus in this work on nearest-neighbor interactions. That is, $\Phi(X)=0$ whenever $X$ is \textit{not} of the form $X=\{x_{1},x_{2}\}$ where $\text{dist}(x_{1},x_{2})=1$. We will write $H_{x_{1},x_{2}}$ instead of $\Phi_{\{x_{1},x_{2}\}}$.\\

To begin analyzing $\tau^{\Lambda}_{z}(A)$, where $A$ is an observable supported at the origin, we consider the Taylor series
\begin{equation}\label{powerseries}
 \tau^{\Lambda}_{z}(A)=\sum_{n=0}^{\infty}\frac{z^{n}}{n!}\mathcal{C}_{\Lambda}^{n}(A)
\end{equation}
where $\mathcal{C}_{\Lambda}(A) \coloneqq [H_{\Lambda},A]$, the commutator of $H_{\Lambda}$ with $A$. Note that $\mathcal{C}_{\Lambda}^{n}(A)$ can be written as a sum of ``iterated commutators'' of the form $\left[H_{x_{n},y_{n}},\left[H_{x_{n-1},y_{n-1}},...,[H_{x_{1},y_{1}},A]...\right]\right]$ where:

\begin{enumerate}
 \item Either $x_{1}$ is the origin or $y_{1}$ is the origin.
 \item For  $2 \leq i \leq n$, at least one of $x_{i}$ or $y_{i}$ is in $\left\lbrace x_{1},y_{1},...,x_{i-1},y_{i-1} \right\rbrace $.
\end{enumerate}

\medskip
For example, in two dimensions,
\begin{multline}
\mathcal{C}_{\Lambda}^{1}(A) = \mathcal{C}_{\Lambda}(A) = [H_{(0,0), (1,0)},A] + [H_{(0,0), (0,1)},A]+ [H_{(0,0), (-1,0)},A] + [H_{(0,0), (0,-1)},A]
\end{multline}
independent of how large $\Lambda$ is since $A$ commutes with all of the other terms in $H_{\Lambda}$. Similarly,

\begin{multline}
\mathcal{C}_{\Lambda}^{2}(A) = \left[H_{(2,0), (1,0)},[H_{(0,0), (1,0)},A]\right] + \left[H_{(1,1),(1,0)},[H_{(0,0),(1,0)},A]\right]\\
+ \left[H_{(0,0),(1,0)},[H_{(0,0),(1,0)},A]\right] + \left[H_{(1,-1),(1,0)},[H_{(0,0),(1,0)},A]\right]\\
+ \left[H_{(0,0),(0,1)},[H_{(0,0),(1,0)},A]\right] + \left[H_{(-1,0),(0,0)},[H_{(0,0),(1,0)},A]\right]\\
+ \left[H_{(0,0),(0,-1)},[H_{(0,0),(1,0)},A]\right] + \ldots +\left[H_{(0,-2),(0,-1)},[H_{(0,0),(0,-1)},A]\right] \, .
\end{multline}

We would like to know how many sequences (which we shall call \textit{commutator sequences}) of nearest-neighbor pairs (equivalently \textit{edges}) $\{x_{1},y_{1}\},...,\{x_{n},y_{n}\}$ satisfying the above conditions there are. If the number of commutator sequences does not grow too fast in $n$, we might hope that $\tau^{\Lambda}_{z}(A)$ is well-approximated by an observable with ``small'' support, which would lead to a proof that the infinite volume limit of $\tau^{\Lambda}_{z}(A)$ exists for all $z$ and is entire analytic.\\

Let $X_{j}^{n}$ be the number of commutator sequences such that the set $\left\lbrace x_{1},y_{1}, \dots ,x_{n},y_{n} \right\rbrace $ contains exactly $j$ distinct points on the lattice. The corresponding connected collection of distinct lattice points is a \textit{lattice animal}. We shall denote it $L(\{x_{1},y_{1}\},...,\{x_{n},y_{n}\})$. Given that $|L(\{x_{1},y_{1}\},...,\{x_{n},y_{n}\})|=j$, the number of nearest-neighbor pairs (equivalently edges) that can be formed from elements of $L(\{x_{1},y_{1}\},...,\{x_{n},y_{n}\})$ is bounded above by $dj$. (To see this, just associate to each lattice point in $L(\{x_{1},y_{1}\},...,\{x_{n},y_{n}\})$ the $d$ distinct neighbors obtained by increasing one of the coordinates by one.)\\

To any lattice animal $L$ we can also associate the \textit{perimeter} $p(L)$ which we define to be the collection of nearest-neighbor pairs $\{x,y\}$ such that $x \in L$ and $y \notin L$. We call a commutator sequence $\{x_{1},y_{1}\},...,\{x_{n},y_{n}\}$ a \textit{lattice animal history of length $n$}, or just \textit{history of length $n$}, if, in addition to being a commutator sequence, it satisfies the following condition: For $2 \leq i \leq n$, $\{x_{i},y_{i}\} \in p(L(\{x_{1},y_{1}\},...,\{x_{i-1},y_{i-1}\}))$. Finally, let $\bar{p}_{n}$ be the average perimeter over all lattice animal histories of length $n-1$. We define the lattice animal associated with the history of length $0$ (the ``empty" history) to be $L_{0} \coloneqq \{ 0 \}$ so that we have $\bar{p}_{1}=2d$.\\

As an example, consider the following commutator sequence :
\begin{equation*}
 \{(0,0),(1,0)\}, \{(1,0),(1,1)\}, \{(1,1),(2,1)\}, \{(1,0),(1,1)\}, \{(0,1),(1,1)\}, \{(0,0),(0,1)\} \, .
\end{equation*}
Figure~\ref{latticeanimal1} gives a graphical representation of this commutator sequence. The lattice point with the open circle is the origin. The edges associated with nearest-neighbor pairs in the sequence are indicated with solid lines and are labelled according to the positions in the commutator sequence in which they appear. The dotted edges represent perimeter edges associated with the underlying lattice animal determined by the commutator sequence. Note that the commutator sequence would be a lattice animal history of length $4$ if the fourth and sixth members of the sequence were deleted.\\

\begin{figure}
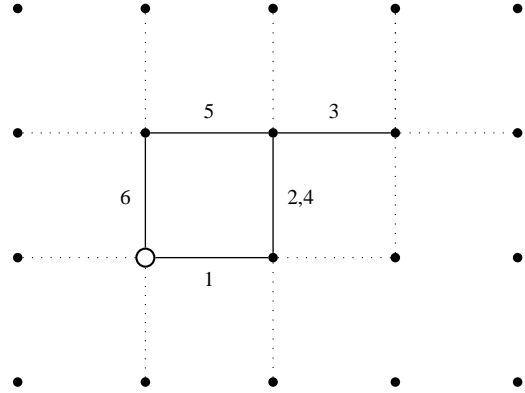

\begin{center}
$
\hspace{0.75cm}
 \begin{psmatrix}[emnode=dot,colsep=1.5cm,rowsep=1.2cm]
	& & & & &\\
	& & & & &\\
	& [mnode=circle] & & & &\\
	& & & & &
  \everypsbox{\scriptstyle}
  \ncline[linewidth=.5pt,linestyle=dotted]{1,2}{2,2}
  \ncline[linewidth=.5pt,linestyle=dotted]{1,3}{2,3}
  \ncline[linewidth=.5pt,linestyle=dotted]{1,4}{2,4}
  \ncline[linewidth=.5pt,linestyle=dotted]{1,4}{2,4}
  \ncline[linewidth=.5pt,linestyle=dotted]{2,1}{2,2}
  \ncline[linewidth=.5pt]{2,2}{2,3}^{5}
  \ncline[linewidth=.5pt]{2,3}{2,4}^{3}
  \ncline[linewidth=.5pt,linestyle=dotted]{2,4}{2,5}
  \ncline[linewidth=.5pt]{2,2}{3,2}<{6}
  \ncline[linewidth=.5pt]{2,3}{3,3}>{2,4}
  \ncline[linewidth=.5pt,linestyle=dotted]{2,4}{3,4}
  \ncline[linewidth=.5pt,linestyle=dotted]{3,1}{3,2}
  \ncline[linewidth=.5pt]{3,2}{3,3}_{1}
  \ncline[linewidth=.5pt,linestyle=dotted]{3,3}{3,4}
  \ncline[linewidth=.5pt,linestyle=dotted]{3,2}{4,2}
  \ncline[linewidth=.5pt,linestyle=dotted]{3,3}{4,3}
 \end{psmatrix}
$
\caption{A Commutator Sequence and the Perimeter of the Underlying Lattice Animal}
\label{latticeanimal1}
\end{center}
\end{figure}

To get an upper bound on $X_{j}^{n}$, we enumerate a slightly different collection of sequences. We begin by making the following definition.

\begin{equation}
 E^{\text{int}}(L) \coloneqq \{ e=\{ v_{1},v_{2} \} \hspace{0.08in} \text{an edge} \, | \, v_{1},v_{2} \in L \}
\end{equation}
Suppose $e_{1},e_{2}, \ldots ,e_{n}$ is a commutator sequence of length $n \geq 1$. (We will freely interchange edges and nearest-neighbor pairs without further comment.) It is helpful to consider the corresponding sequence of lattice animals, $L_{0},L(e_{1}),L(e_{1},e_{2}), \ldots ,L(e_{1},e_{2}, \ldots ,e_{n})$. If $|L(e_{1},e_{2}, \ldots ,e_{n})|=j \geq 2$, then for exactly $j-2$ indices, say $1<i_{2}<\ldots<i_{j-1} \leq n$, we must have $e_{i_{k}} \in P(L(e_{1}, \dots ,e_{i_{k}-1}))$. (We will always have $e_{1} \in P(L_{0})$.) So, $e_{1},e_{i_{2}},e_{i_{3}}, \ldots ,e_{i_{j-1}}$ is a lattice animal history of length $j-1$. Suppose $l \notin \{1,i_{2}, \ldots ,i_{j-1} \}$. Then, if $r$ is the greatest integer such that $i_{r} < l$, then $e_{l} \in E^{\text{in}}(L(e_{1},e_{i_{2}}, \ldots ,e_{i_{r}}))$.\\

Thus, a commutator sequence that will count toward $X_{j}^{n}$ consists of a sequence of $n$ edges in which exactly $j-1$ edges, including the first one, are perimeter edges of the ``immediately preceding lattice animal'', and $n-(j-1)$ edges are interior edges of the ``immediately preceding lattice animal''.\\

Let $Z_{j}^{n}$ be the number of sequences of length $n$ constructed in the following manner. First choose a lattice animal history of length $j-1$. Then choose $j-2$ positions other than the first position in the sequence of length $n$ in which to place the edges from the lattice animal history. Insert the edges from the lattice animal history, in order, into the first position and the additional $j-2$ positions in the sequence. Fill the remaining $n-(j-1)$ positions in the sequence with any of the first $dj$ members (allowing repeats) of an infinite collection of ``dummy edges'', say $\{f_{1},f_{2},f_{3}, \ldots \}$.\\

It is easy to see that $Z_{j}^{n}$ is greater than or equal to $X_{j}^{n}$. All commutator sequences that would be counted in $X_{j}^{n}$ can be constructed in a very similar fashion as the sequences counted in $Z_{j}^{n}$. The only difference is that commutator sequences have interior edges chosen from the immediately preceding lattice animal in the positions where the sequences in $Z_{j}^{n}$ have dummy edges. A lattice animal history $e_{1},e_{i_{2}},\ldots,e_{i_{j-1}}$ determines a unique collection of edges, $E^{\text{int}}(L(e_{1},e_{i_{2}},\ldots,e_{i_{j-1}}))$, from which these interior edges can be selected. (Actually, at most of the steps, the collection of edges from which an interior edge can be selected is strictly contained in this collection.) Since $|E^{\text{int}}(L(e_{1},e_{i_{2}},\ldots,e_{i_{j-1}}))| < dj$, $X_{j}^{n} \leq Z_{j}^{n}$.\\

We have the following.

\begin{lemma}
 For $n \geq 1$ and $j \geq 2$,
\begin{equation}\label{Xnjboundary}
 X^{n}_{n+1}=\prod_{i=1}^{n} \bar{p}_{i} \, , \hspace{0.1in} X^{n}_{j}=0 \hspace{0.08in} \text{if $j > n+1$} \, ,
\end{equation}
and
\begin{equation}\label{XnjZnj}
 X^{n}_{j} \leq Z^{n}_{j} \leq (2d)^{n-1}\bar{p}_{1} \cdot \ldots \cdot \bar{p}_{j-1} j^{n-(j-1)} \, .
\end{equation}

\end{lemma}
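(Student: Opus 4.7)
The plan has three parts matching the three assertions of the lemma.

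First, for the boundary case $X^{n}_{n+1} = \prod_{i=1}^{n}\bar{p}_{i}$, I would observe that if a commutator sequence of length $n$ produces a lattice animal with exactly $n+1$ distinct points, then each edge must contribute a new vertex, and hence each $e_{i}$ must be a perimeter edge of the preceding animal (since an interior edge would add no new point, and a repeated perimeter edge would also add none). So commutator sequences counted by $X^{n}_{n+1}$ are precisely the lattice animal histories of length $n$. Letting $N_{k}$ denote the number of such histories, the definition of $\bar{p}_{k+1}$ as the average perimeter over histories of length $k$ gives the recursion $N_{k+1} = \bar{p}_{k+1} N_{k}$, which telescopes from $N_{0}=1$ to yield the claimed product. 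The statement $X^{n}_{j}=0$ for $j>n+1$ is immediate: each edge increases $|L|$ by at most one, so starting from $|L_{0}|=1$ we cannot exceed $n+1$ distinct points in $n$ steps.

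Second, the inequality $X^{n}_{j} \leq Z^{n}_{j}$ is essentially proved in the paragraph preceding the lemma: any commutator sequence contributing to $X^{n}_{j}$ decomposes uniquely into a sub-sequence of $j-1$ perimeter-adding edges (forming a lattice animal history of length $j-1$) together with $n-(j-1)$ interior edges chosen from $E^{\text{int}}(L(e_{1},e_{i_{2}},\ldots,e_{i_{j-1}}))$, a set of size strictly less than $dj$. Replacing the interior-edge slots by dummy edges drawn from the $dj$-element alphabet $\{f_{1},\ldots,f_{dj}\}$ can only enlarge the count, yielding $X^{n}_{j}\leq Z^{n}_{j}$.

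Third, I would compute $Z^{n}_{j}$ directly from its construction and then apply one crude combinatorial estimate. The construction gives exactly
\begin{equation*}
Z^{n}_{j} \;=\; \binom{n-1}{j-2}\,\prod_{i=1}^{j-1}\bar{p}_{i}\,(dj)^{\,n-(j-1)},
\end{equation*}
since there are $\prod_{i=1}^{j-1}\bar{p}_{i}$ lattice animal histories of length $j-1$ by the first part of the lemma, $\binom{n-1}{j-2}$ ways to choose the remaining perimeter-edge positions among positions $2,\ldots,n$, and $(dj)^{n-(j-1)}$ ways to fill the dummy positions. Splitting $(dj)^{n-j+1}=d^{n-j+1}j^{n-j+1}$ and using the trivial bound $\binom{n-1}{j-2}\leq 2^{n-1}$ together with $d^{j-2}\geq 1$ gives $\binom{n-1}{j-2}\,d^{n-j+1}\leq 2^{n-1}d^{n-1}=(2d)^{n-1}$, which yields the claimed bound.

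None of these steps poses a real obstacle; the only subtlety is to recognize in the first part that when $j=n+1$ one cannot afford any repeated perimeter edge or any interior edge, so the count is exactly the number of histories and therefore computable via the definition of the average perimeter.
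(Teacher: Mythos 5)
Your proposal is correct and follows essentially the same route as the paper: identifying the sequences counted by $X^{n}_{n+1}$ with lattice animal histories of length $n$ and telescoping via the definition of $\bar{p}_{i}$, invoking the decomposition into perimeter and interior edges for $X^{n}_{j}\leq Z^{n}_{j}$, and bounding $\binom{n-1}{j-2}\leq 2^{n-1}$ together with $d^{n-j+1}\leq d^{n-1}$ to absorb the constants into $(2d)^{n-1}$. No gaps.
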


\begin{proof}
 Obviously $X^{n}_{j}=0$ if $j>n+1$ since in this case a lattice animal history of length $j-1$ has more than $n$ edges. The other equation in \eqref{Xnjboundary} results from the definition of $X^{n}_{n+1}$ and $\bar{p}_{n}$. We note that $X^{n}_{n+1}$ is precisely the number of lattice animal histories of length $n$. Since all lattice animal histories of length $n$ result from adding a perimeter edge to a lattice animal history of length $n-1$, we have

\begin{equation}
 X^{n}_{n+1}=\sum_{H_{n-1}} |p(L(e_{1},\ldots,e_{n-1}))|
\end{equation}
where $H_{n-1}$ is the collection of lattice animal histories of length $n-1$. In addition, by definition,
\begin{equation}
 \bar{p}_{n}=\frac{\sum_{H_{n-1}} |p(L(e_{1},\ldots,e_{n-1}))|}{X^{n-1}_{n}} \, .
\end{equation}
So, $X^{n}_{n+1}=\bar{p}_{n}X^{n-1}_{n}$. Also, $X^{1}_{2}=\bar{p}_{1}$ since the number of lattice animal histories of length one is just the number of edges containing the origin, which is $2d=\bar{p}_{1}$.\\

To understand \eqref{XnjZnj}, we consider again how the sequences counting toward $Z^{n}_{j}$ are constructed. First, a lattice animal history of length $j-1$ is chosen. By \eqref{Xnjboundary}, there exist exactly $\bar{p}_{1} \cdot \ldots \cdot \bar{p}_{j-1}$ such sequences. Then, $j-2$ positions in the sequence of length $n$ in addition to the first position are selected. This can be done in $\binom{n-1}{j-2} \leq 2^{n-1}$ ways. Finally, each of the remaining $n-(j-1)$ positions are filled with dummy edges chosen from among a set of size $dj$. This can be done in $(dj)^{n-(j-1)}$ ways. Thus,

\begin{equation}
\begin{split}
 X^{n}_{j} \leq Z^{n}_{j} &\leq \bar{p}_{1} \cdot \ldots \cdot \bar{p}_{j-1} 2^{n-1} (dj)^{n-(j-1)}\\
 &\leq (2d)^{n-1} \bar{p}_{1} \cdot \ldots \cdot \bar{p}_{j-1} j^{n-(j-1)} \, .
\end{split}
\end{equation}

\end{proof}

If it is true that

\begin{equation}\label{perbd}
\bar{p}_{j} \leq C_{1}\cdot j^{\alpha}
\end{equation}
for some constant $C_{1}$ depending only on $d$ and some nonnegative $\alpha < 1$, then we would have

\begin{equation}
\sum_{j=2}^{n+1} X_{j}^{n} \leq (2dC_{1})^{n-1} \displaystyle{\sum_{j=2}^{n+1} [ (j-1)! ]^{\alpha}j^{n-j+1}} \, .
\end{equation}
Stirling's approximation gives

\begin{equation}
 (j-1)! < \sqrt{2\pi(j-1)}\left(\frac{j-1}{e} \right)^{j-1}e = \frac{\sqrt{2\pi (j-1)}}{e^{j-2}} (j-1)^{j-1} \, .
\end{equation}
Therefore,

\begin{equation}
 \sum_{j=2}^{n+1} X_{j}^{n} < (2dC_{1})^{n-1} \displaystyle{\sum_{j=2}^{n+1} \frac{\left[ 2\pi (j-1) \right]^{\frac{\alpha}{2}}}{e^{\alpha(j-2)}} j^{n-\beta(j-1)}}
\end{equation}
where $\beta = 1-\alpha$. Therefore,

\begin{equation}\label{xnjfirstbound}
 \displaystyle{\sum_{j=2}^{n+1} X_{j}^{n} < (2dC_{1})^{n-1} n(2\pi n)^{\frac{\alpha}{2}} \max_{j \in \left\lbrace  2, \ldots ,n+1 \right\rbrace } j^{n-\beta(j-1)}} \, .
\end{equation}

This inequality gives an upper bound on the number of terms in $\mathcal{C}_{\Lambda}^{n}(A)$, where each term is of the form $\left[H_{x_{n},x_{n}+1},\left[H_{x_{n-1},x_{n-1}+1},...,[H_{x_{1},x_{1}+1},A]...\right]\right]$. A bound on the nearest-neighbor interaction implies a bound on the norm of this ``$n$-fold commutator'', which together with \eqref{xnjfirstbound} may enable us to show that \eqref{powerseries} converges absolutely. This would lead to a general locality result.

\section{One-Dimensional Systems}

In one dimension, we clearly have an estimate of the form \eqref{perbd}. In fact, we have $\bar{p}_{j} = 2$ for all $j$. Thus, 

\begin{equation}\label{Xnj1d}
 \displaystyle{\sum_{j=2}^{n+1} X_{j}^{n} < 4^{n-1} n \max_{j \in \left\lbrace  2, \ldots ,n+1 \right\rbrace } j^{n-j+1} \leq 8^{n-1} \max_{j \in \left\lbrace  2, \ldots ,n+1 \right\rbrace } j^{n-j+1}} \, .
\end{equation}

\begin{theorem}
 Fix a finite subset $\Lambda_{L}=\left\lbrace -L,-(L-1),...,L-1,L \right\rbrace   \subset \mathbb{Z}$ and suppose the interaction satisfies $  \left\lVert  H_{x_{1},x_{2}}  \right\rVert   \leq M$ for all nearest-neighbor pairs $\{x_{1},x_{2}\}$ in $\mathbb{Z}$. Then, there exist constants $C_{1}$ and $C_{2}$, not depending on $L$, such that for any observable $A$ and any positive integer $ m > \exp(C_{1}M|z|)$

\begin{equation}
 \tau^{\Lambda_{L}}_{z}(A)=B^{\Lambda_{L}}_{m, z} + C^{\Lambda_{L}}_{m, z}
\end{equation}
\\
where the support of $B^{\Lambda_{L}}_{m, z}$ stays within a distance $m$ of the support of $A$ and

\begin{equation}
   \left\lVert  C^{\Lambda_{L}}_{m, z}  \right\rVert   \leq | \text{spt}(A) |   \left\lVert  A  \right\rVert  C_{2}e^{-m}.
\end{equation}
If dist$(\text{spt}(A),\{ -L, L \}) > m$, then $B^{\Lambda_{L}}_{m, z}$ is independent of $L$.
\end{theorem}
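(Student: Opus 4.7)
The plan is to Taylor-expand $\tau^{\Lambda_L}_z(A)$ using \eqref{powerseries} and split each power $\mathcal{C}^n_{\Lambda_L}(A)$ according to the size $j$ of the underlying lattice animal $L(e_1,\ldots,e_n)$. Write $\mathcal{C}^n_{\Lambda_L}(A) = \sum_{j=1}^{n+1} \mathcal{C}^{n,j}_{\Lambda_L}(A)$, where $\mathcal{C}^{n,j}_{\Lambda_L}(A)$ collects the iterated commutators whose commutator sequence has an animal of exactly $j$ sites, and define
\begin{equation*}
 B^{\Lambda_L}_{m,z} \coloneqq \sum_{n=0}^{\infty} \frac{z^n}{n!} \sum_{j=1}^{\min(n+1,\, m+1)} \mathcal{C}^{n,j}_{\Lambda_L}(A), \qquad C^{\Lambda_L}_{m,z} \coloneqq \tau^{\Lambda_L}_z(A) - B^{\Lambda_L}_{m,z}.
\end{equation*}

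The support and $L$-independence claims then follow quickly. Each iterated commutator contributing to $B^{\Lambda_L}_{m,z}$ has support inside $L(e_1,\ldots,e_n) \cup \text{spt}(A)$; in one dimension, a connected lattice animal with $j \le m+1$ sites is an interval, and any such interval meeting $\text{spt}(A)$ lies within distance $j-1 \le m$ of $\text{spt}(A)$. If in addition $\text{dist}(\text{spt}(A), \{-L, L\}) > m$, then every edge used in $B^{\Lambda_L}_{m,z}$ sits strictly inside $\Lambda_L$ and each iterated commutator coincides with its analogue on any larger lattice, so $B^{\Lambda_L}_{m,z}$ is $L$-independent.

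The core task is bounding $\|C^{\Lambda_L}_{m,z}\|$. Each iterated $n$-commutator has norm at most $(2M)^n\|A\|$, and the number of commutator sequences whose animal has exactly $j$ sites is at most $|\text{spt}(A)|\cdot X^n_j$, the factor $|\text{spt}(A)|$ absorbing the choice of initial seed in $\text{spt}(A)$ rather than a fixed origin. Specializing \eqref{XnjZnj} to $d=1$ with $\bar p_i = 2$ gives $X^n_j \le 2^{n+j-2} j^{n-j+1}$. The untruncated series $\sum_n \tfrac{|z|^n}{n!}\sum_j X^n_j (2M)^n$ converges absolutely for every $z$ (an easy check from \eqref{Xnj1d} combined with Stirling's formula), so I may interchange sums and apply $\sum_{n \ge j-1}(4M|z|\, j)^n/n! \le e^{4M|z|\, j}$ to obtain
\begin{equation*}
 \|C^{\Lambda_L}_{m,z}\| \;\le\; |\text{spt}(A)|\,\|A\| \sum_{j=m+2}^{\infty} 2^{j-2}\, j^{1-j}\, e^{4M|z|\, j} \;=\; \tfrac{|\text{spt}(A)|\,\|A\|}{4} \sum_{j=m+2}^{\infty} j\left(\frac{K}{j}\right)^{j},
\end{equation*}
where $K \coloneqq 2 e^{4M|z|}$.

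The final and most delicate step is extracting the factor $e^{-m}$ from this tail. I would choose $C_1$ large enough that the hypothesis $m > \exp(C_1 M|z|)$ forces $m+2 \ge 2eK$, so that $K/j \le 1/(2e)$ throughout the summation range; then $j(K/j)^j \le j(2e)^{-j}$ and summing the geometric series yields $\|C^{\Lambda_L}_{m,z}\| \le |\text{spt}(A)|\,\|A\|\, C_2 e^{-m}$. The main obstacle is precisely this calibration: one must select $C_1$ so that the factor $K^j = (2e^{4M|z|})^j$ is dominated by $j^{j-1}$ for every $j \ge m+2$, which is exactly what the threshold $m > \exp(C_1 M|z|)$ guarantees.
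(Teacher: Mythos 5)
Your argument is correct in substance but takes a genuinely different route from the paper's. The paper truncates the Taylor series in the commutator order $n$: it sets $B^{\Lambda_L}_{m,z}=\sum_{n<m}\frac{z^n}{n!}\mathcal{C}^n_{\Lambda_L}(A)$, bounds $\sum_j X^n_j$ by optimizing $\max_j j^{\,n-j+1}$ with calculus to get $\|\mathcal{C}^n_{\Lambda_L}(A)\|\le \tfrac{15}{2}\|A\|\,|\text{spt}(A)|\left(\tfrac{480eM}{\ln(n+1)}\right)^n n!$, and then the hypothesis $m>\exp(C_1M|z|)$ makes $480eM|z|/\ln(n+1)<1/e$ for $n\ge m$, so the tail in $n$ is a geometric series summing to $O(e^{-m})$. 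You instead truncate in the animal size $j$ and resum over $n$ for each fixed $j$; this sidesteps the $\max_j$ optimization entirely, replacing it with the elementary bound $\sum_{n\ge j-1}(4M|z|\,j)^n/n!\le e^{4M|z|\,j}$, at the cost of justifying the interchange of the $n$ and $j$ sums (which you do via absolute convergence of the full double series). The two choices of $B^{\Lambda_L}_{m,z}$ are different operators, but both satisfy the support and $L$-independence claims, and your verification of those claims is sound. Your route is arguably cleaner.

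One repair is needed in your final calibration. The requirement $m+2\ge 2eK=4e\,e^{4M|z|}$ cannot be forced by $m>\exp(C_1M|z|)$ alone when $M|z|$ is small: as $M|z|\to 0$ the hypothesis only guarantees $m\ge 2$, while $4e\,e^{4M|z|}\to 4e>10$, so no choice of $C_1$ makes your implication literally true for all $z$. The fix is routine: pick $C_1$ (say $C_1=100$) so that $\exp(C_1M|z|)\ge 4e\,e^{4M|z|}$ whenever $M|z|$ exceeds a fixed small $\epsilon$; for $M|z|<\epsilon$ the constant $K$ is bounded by an absolute constant, hence $K/j\le 1/(2e)$ for all $j$ beyond some absolute threshold $j_0$, and the finitely many values $m+2<j_0$ are handled by enlarging $C_2$, since $\sum_{j\ge 2}j(K/j)^j$ is then bounded by an absolute constant. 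With that adjustment your proof is complete.
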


\begin{proof} 
We want to investigate $\max_{j\in \{2,...,n+1\}}j^{n-j+1}$. Define

\begin{equation}
 f(x) \coloneqq (n-x+1)\ln x \, .
\end{equation}

Straightforward calculus shows that the maximum of this function occurs when $x_{0}$ satisfies

\begin{equation}\label{nx}
 n+1=x_{0}(1+\ln x_{0}) \, .
\end{equation}

Obviously $x_{0}^{n-x_{0}+1} \leq x_{0}^{n+1}$ and \eqref{nx} gives $x_{0}=\frac{n+1}{\ln x_{0} +1}$. So

\begin{equation}
 \begin{split}
  x_{0}^{n+1}& =\displaystyle{\left( \frac{n+1}{\ln x_{0} + 1}\right) ^{n+1}}\\
	& =\frac{(n+1)^{n+1}}{(\ln x_{0} + 1)^{n+1}} \, .
 \end{split}
\end{equation}

Since $2 \leq x_{0} \leq n+1$, \eqref{nx} also gives

\begin{equation}
 \ln x_{0} +1=\ln(n+1)-\ln(1+\ln x_{0})+1 \geq \ln(n+1)-\ln(1+\ln (n+1))+1 \, .
\end{equation}

In addition, it is easy to see that

\begin{equation}
 \ln(1+\ln (n+1)) \leq \frac{9}{10} \ln(n+1) \, .
\end{equation}

Therefore,

\begin{equation}
 \ln x_{0}+1 \geq \frac{1}{10}\ln(n+1)+1 >  \frac{1}{10}\ln(n+1) \, .
\end{equation}

Thus,

\begin{equation}
 x_{0}^{n+1} < \frac{10^{n+1}(n+1)^{n+1}}{(\ln(n+1))^{n+1}} \, .
\end{equation}

Inserting this in \eqref{Xnj1d} gives

\begin{equation}\label{ubsumxjn}
\begin{split}
 \sum_{j=2}^{n+1}X_{j}^{n}& < \displaystyle{8^{n-1} \frac{10^{n+1}(n+1)^{n+1}}{[\ln (n+1)]^{n+1}}}\\
 	& \leq \displaystyle{\frac{1}{64} \left(\frac{80}{\ln (n+1)} \right)^{n+1} n \left( \frac{n+1}{n} \right)^{n+1} n^{n}}\\
	& < \displaystyle{\frac{1}{64} \left( \frac{240}{\ln (n+1)} \right)^{n+1} n^{n}}\\
	& < \displaystyle{\frac{15}{2} \left( \frac{240}{\ln (n+1)} \right)^{n} n^{n}}\\
 	& < \displaystyle{\frac{15}{2} \left( \frac{240e}{\ln (n+1)} \right)^{n} n!} 
\end{split}
\end{equation}
where we have used the simple relations $\frac{n+1}{n} \leq \frac{3}{2}$ when $n \geq 2$, $n < 2^{n+1}$, $\ln 2 > \frac{1}{2}$, and $n^{n} < e^{n} n!$.\\

So, when spt$(A)=\{0\}$, \eqref{ubsumxjn} gives an upper bound on the number of terms in $\mathcal{C}_{\Lambda}^{n}(A)$, where each term is of the form $[H_{x_{n},x_{n}+1},[H_{x_{n-1},x_{n-1}+1},...,[H_{x_{1},x_{1}+1},A]...]]$. If we expand such an iterated commutator, it will have $2^{n}$ summands, each being a product of $n+1$ operators. Thus,

\begin{equation}
   \left\lVert  [H_{x_{n},x_{n}+1},[H_{x_{n-1},x_{n-1}+1},...,[H_{x_{1},x_{1}+1},A]...]]   \right\rVert   \leq 2^{n}M^{n}   \left\lVert  A  \right\rVert   \, .
\end{equation}

Therefore,

\begin{equation}
   \left\lVert \mathcal{C}_{\Lambda}^{n}(A)  \right\rVert  < \displaystyle{\frac{ 15 \left\Vert A \right\Vert}{2} \left( \frac{480eM}{\ln (n+1)} \right)^{n} n!} \, .
\end{equation}
\\

We also note that for general $A$ with finite support,

\begin{equation}
   \left\lVert \mathcal{C}_{\Lambda}^{n}(A)  \right\rVert  < \displaystyle{\frac{ 15 \left\Vert A \right\Vert \cdot |\text{spt}(A)|}{2} \left( \frac{480eM}{\ln (n+1)} \right)^{n} n!} \, .
\end{equation}

Now we choose m large enough that

\begin{equation}
 \frac{480eM|z|}{\ln (m+1)} < \frac{1}{e} \, .
\end{equation}

That is,
\begin{equation}
 m > \exp(480e^{2}M|z|)-1 \, .
\end{equation}

Then,

\begin{equation}
 \begin{split}
  \sum_{n=m}^{\infty}\frac{|z|^{n}}{n!}  \left\lVert \mathcal{C}_{\Lambda}^{n}(A)  \right\rVert  & < \sum_{n=m}^{\infty}\displaystyle{\frac{15 \left\Vert A \right\Vert \cdot |\text{spt}(A)|}{2} \left( \frac{1}{e} \right)^{n}}\\
	& = \displaystyle{\frac{15 \left\Vert A \right\Vert \cdot |\text{spt}(A)|}{2} \frac{e}{e-1} e^{-m}}\\
	& < \displaystyle{\frac{15}{2} \left\Vert A \right\Vert \cdot |\text{spt}(A)| e^{-m}} \, .
 \end{split}
\end{equation}

The theorem follows.
\end{proof}

This theorem enables us to define infinite volume complex-time-evolved local observables, and show that they are entire analytic functions of the time variable.

\begin{lemma}\label{UniformCauchy}
If $R > 0$ and $A$ is a local observable, then $\left\lbrace \tau^{\Lambda_{L}}_{z}(A) \right\rbrace_{L=L_{0}}^{\infty}$ (where $L_{0}$ is large enough that \text{spt}$(A)$ is contained in $\Lambda_{L_{0}}$) is uniformly Cauchy in $\overline{B(0,R)}$, the closed ball of radius $R$ centered at the origin.
\end{lemma}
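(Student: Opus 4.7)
The plan is to deduce the uniform Cauchy property directly from the previous theorem, which already provides the decomposition $\tau^{\Lambda_L}_{z}(A) = B^{\Lambda_L}_{m,z} + C^{\Lambda_L}_{m,z}$ with a quantitative bound on the error term and a locality-based stability property for the main term.

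First I would fix $R > 0$ and an observable $A$ supported in some fixed finite set. Given $\varepsilon > 0$, I want to produce $L_1$ so large that the tail estimates force $\|\tau^{\Lambda_L}_{z}(A) - \tau^{\Lambda_{L'}}_{z}(A)\| < \varepsilon$ for all $L, L' \geq L_1$ and all $z \in \overline{B(0,R)}$. The key observation is that in the previous theorem the parameter $m$ must only satisfy $m > \exp(C_1 M |z|)$; since we are working on a fixed ball, the single choice $m > \exp(C_1 M R)$ works uniformly for all $z \in \overline{B(0,R)}$. Using the error bound $\|C^{\Lambda_L}_{m,z}\| \leq |\text{spt}(A)|\,\|A\|\,C_2 e^{-m}$, I would then further enlarge $m$ so that
\begin{equation*}
2\,|\text{spt}(A)|\,\|A\|\,C_2\,e^{-m} < \varepsilon.
\end{equation*}

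With $m$ so chosen, I would select $L_1$ large enough that $\text{dist}(\text{spt}(A), \{-L_1, L_1\}) > m$. Then for every $L \geq L_1$ (and similarly for $L'$) the separation condition of the theorem holds, which means that $B^{\Lambda_L}_{m,z}$ coincides with one and the same $z$-dependent operator, independent of $L$. Consequently $B^{\Lambda_L}_{m,z} - B^{\Lambda_{L'}}_{m,z} = 0$ for all $L, L' \geq L_1$ and all $z$ in the ball, so the difference $\tau^{\Lambda_L}_{z}(A) - \tau^{\Lambda_{L'}}_{z}(A)$ reduces to $C^{\Lambda_L}_{m,z} - C^{\Lambda_{L'}}_{m,z}$, which is bounded in norm by $2|\text{spt}(A)|\,\|A\|\,C_2 e^{-m} < \varepsilon$ uniformly in $z \in \overline{B(0,R)}$. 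This is precisely the uniform Cauchy condition.

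There is no real obstacle here once the previous theorem is in hand; the only subtlety is keeping straight that a single choice of $m$ works for the entire closed ball (because the required lower bound on $m$ depends monotonically on $|z|$), and that once $m$ is fixed, the separation hypothesis that guarantees $L$-independence of $B^{\Lambda_L}_{m,z}$ is a finite geometric condition that can be arranged by taking $L$ large. The proof is essentially a rewriting of the locality decomposition as a Cauchy criterion.
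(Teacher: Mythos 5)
Your proposal is correct and follows essentially the same argument as the paper: fix $m > \exp(C_{1}MR)$ so that the error term is small uniformly on the ball, take $L$ large enough that the separation hypothesis makes $B^{\Lambda_L}_{m,z}$ independent of $L$, and bound the difference by the two $C$-terms. No substantive difference from the paper's proof.
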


\begin{proof}
 Let $\epsilon > 0$. Choose $m > \exp(C_{1}MR)$ such that $|\text{spt}(A)| \cdot \left\Vert A \right\Vert C_{2} e^{-m} < \frac{\epsilon}{2}$. Then choose $L_{1}$ large enough that dist(spt($A$), $\left\lbrace -L_{1}, L_{1} \right\rbrace$) $> m$. Suppose $L_{2}, L_{3} > L_{1}$. Then, for $|z| < R$,

\begin{equation}
\begin{split}
 \left\Vert \tau^{\Lambda_{L_{2}}}_{z}(A) - \tau^{\Lambda_{L_{3}}}_{z}(A) \right\Vert &= \left\Vert B^{\Lambda_{L_{2}}}_{m, z} + C^{\Lambda_{L_{2}}}_{m, z} - B^{\Lambda_{L_{3}}}_{m, z} - C^{\Lambda_{L_{3}}}_{m, z} \right\Vert \\
	& = \left\Vert C^{\Lambda_{L_{2}}}_{m, z} - C^{\Lambda_{L_{3}}}_{m, z} \right\Vert \\
	& \leq \left\Vert C^{\Lambda_{L_{2}}}_{m, z} \right\Vert + \left\Vert C^{\Lambda_{L_{3}}}_{m, z} \right\Vert \\
	& < \epsilon \, .
\end{split}
\end{equation}

\end{proof}

This lemma enables us to define the following infinite volume observable:

\begin{equation}
 \tau_{z}(A) \coloneqq \lim_{L \to \infty} \tau^{\Lambda_{L}}_{z}(A) \, .
\end{equation}

\begin{lemma}
 $\tau_{z}(A)$ is a continuous function of $z$.
\end{lemma}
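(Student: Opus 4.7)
The plan is to deduce continuity from the uniform convergence established in Lemma~\ref{UniformCauchy}. The key facts to combine are: (i) for each fixed $L$, the map $z \mapsto \tau^{\Lambda_L}_z(A) = e^{izH_{\Lambda_L}} A e^{-izH_{\Lambda_L}}$ is entire, and in particular continuous, in $z$, because $H_{\Lambda_L}$ is a bounded self-adjoint operator; and (ii) the algebra of bounded operators equipped with the operator norm is a Banach space, so norm-Cauchy sequences have limits.

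First I would fix $R > 0$ and work on the closed ball $\overline{B(0,R)}$. By Lemma~\ref{UniformCauchy}, the sequence $\{\tau^{\Lambda_L}_z(A)\}_{L \geq L_0}$ is uniformly Cauchy on $\overline{B(0,R)}$. Completeness of the operator norm then yields pointwise convergence on $\overline{B(0,R)}$, and the definition shows this pointwise limit agrees with $\tau_z(A)$; the uniformly-Cauchy property upgrades this to uniform convergence of $\tau^{\Lambda_L}_z(A) \to \tau_z(A)$ on $\overline{B(0,R)}$.

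Next I would invoke the standard fact that a uniform limit of continuous functions is continuous. Since each $\tau^{\Lambda_L}_z(A)$ is continuous in $z$ (being entire), and the convergence is uniform on $\overline{B(0,R)}$, the limit $\tau_z(A)$ is continuous on $\overline{B(0,R)}$. Because $R > 0$ was arbitrary, $\tau_z(A)$ is continuous on all of $\mathbb{C}$.

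There is no real obstacle here; the lemma is essentially a corollary of Lemma~\ref{UniformCauchy} together with the elementary ``uniform limit of continuous functions is continuous'' theorem applied in the Banach space of bounded operators. The only subtlety worth flagging in the write-up is making explicit that uniform-Cauchy plus norm-completeness gives a uniform limit (rather than just a pointwise one), which is what allows continuity to pass to the limit.
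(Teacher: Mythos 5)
Your proposal is correct and is essentially the paper's argument: the paper simply writes out the ``uniform limit of continuous functions is continuous'' fact as an explicit $\epsilon/3$ triangle-inequality estimate, using the decomposition from the locality theorem to get the uniform-in-$z$ approximation of $\tau_z(A)$ by $\tau^{\Lambda_L}_z(A)$ on a ball of radius $R > |z_0|$ and the continuity of each finite-volume evolution. Your added remark about uniform-Cauchy plus completeness yielding a uniform (not just pointwise) limit is exactly the implicit step in the paper.
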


\begin{proof}
 Fix $z_{0} \in \mathbb{C}$ and $R > |z_{0}|$. Let $\epsilon > 0$. Choose $m > \exp(C_{1}MR)$ such that $|\text{spt}(A)| \cdot \left\Vert A \right\Vert C_{2} e^{-m} < \frac{\epsilon}{3}$. Then choose $L$ large enough that dist(spt($A$), $\left\lbrace -L, L \right\rbrace$) $> m$ and $\left\Vert \tau_{z}(A) - \tau^{\Lambda_{L}}_{z}(A) \right\Vert < \frac{\epsilon}{3}$ for all $z$ satisfying $|z| < R$. Choose $\delta$ small enough that $\delta + |z_{0}| < R$ and $\left\Vert \tau^{\Lambda_{L}}_{z_{1}}(A) - \tau^{\Lambda_{L}}_{z_{0}}(A) \right\Vert < \frac{\epsilon}{3}$ if $|z_{1}-z_{0}| < \delta$. Then, for $|z_{1} - z_{0}| < \delta$,

\begin{equation}
 \begin{split}
  \left\Vert \tau_{z_{1}}(A) - \tau_{z_{0}}(A) \right\Vert & \leq \left\Vert \tau_{z_{1}}(A) - \tau^{\Lambda_{L}}_{z_{1}}(A) \right\Vert + \left\Vert \tau^{\Lambda_{L}}_{z_{1}}(A) - \tau^{\Lambda_{L}}_{z_{0}}(A) \right\Vert + \left\Vert \tau^{\Lambda_{L}}_{z_{0}}(A) - \tau_{z_{0}}(A) \right\Vert \\
	& < \epsilon \, .
\end{split}
\end{equation}

\end{proof}

We arrive at the following.

\begin{theorem}
 $\tau_{z}(A)$ is an entire analytic function.
\end{theorem}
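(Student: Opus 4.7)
My plan is to combine the locally uniform convergence from Lemma~\ref{UniformCauchy} with the fact that each finite-volume evolution $\tau^{\Lambda_L}_z(A)$ is entire in $z$, and then invoke the Banach-space-valued analogue of the classical theorem that a locally uniform limit of analytic functions is analytic.

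First I would fix an arbitrary $R>0$ and restrict attention to the closed ball $\overline{B(0,R)}\subset\mathbb{C}$. For each finite $L$, the generator $H_{\Lambda_L}$ is a bounded self-adjoint operator, so $z\mapsto\tau^{\Lambda_L}_z(A)=e^{izH_{\Lambda_L}}Ae^{-izH_{\Lambda_L}}$ is entire in the operator-norm topology. By Lemma~\ref{UniformCauchy} the sequence $\{\tau^{\Lambda_L}_z(A)\}_L$ is uniformly Cauchy on $\overline{B(0,R)}$, and hence converges uniformly there to $\tau_z(A)$.

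Next I would upgrade this uniform convergence to analyticity using Morera's theorem in the Banach algebra setting. For any closed triangle $\Delta\subset B(0,R)$, Cauchy's theorem gives $\oint_{\partial\Delta}\tau^{\Lambda_L}_z(A)\,dz=0$ for every $L$, and uniform convergence of the integrands on the compact set $\partial\Delta$ lets me exchange limit and contour integral to conclude $\oint_{\partial\Delta}\tau_z(A)\,dz=0$. Morera's theorem then forces $\tau_z(A)$ to be analytic on $B(0,R)$; equivalently, one may pair with an arbitrary continuous linear functional on the algebra of bounded operators and apply the scalar Morera theorem. Since $R$ was arbitrary, $\tau_z(A)$ is entire.

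A more explicit alternative, useful as a sanity check, is to note that for each fixed $n$ the iterated commutator $\mathcal{C}^n_{\Lambda_L}(A)$ is independent of $L$ once $\Lambda_L$ contains the $n$-neighbourhood of $\text{spt}(A)$, so $\mathcal{C}^n_\infty(A):=\lim_L \mathcal{C}^n_{\Lambda_L}(A)$ is well defined. The $L$-uniform coefficient bound derived in the proof of the previous theorem then shows that $\sum_{n\ge 0}(z^n/n!)\mathcal{C}^n_\infty(A)$ converges in operator norm for every $z\in\mathbb{C}$, uniformly on compact subsets, and equals $\tau_z(A)$, exhibiting it directly as an entire power series. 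I do not anticipate any significant obstacle here: the combinatorial heavy lifting was already done in the preceding theorem, and this last step is essentially a matter of packaging.
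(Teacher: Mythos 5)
Your main argument---uniform convergence of the entire finite-volume functions $\tau^{\Lambda_L}_z(A)$ on compact sets, followed by exchanging limit and contour integral and invoking Morera's theorem---is exactly the route the paper takes, with the paper re-deriving the needed estimate $\lVert \tau_z(A)-\tau^{\Lambda_L}_z(A)\rVert$ from the $B$/$C$ decomposition rather than citing the uniform-Cauchy lemma directly. The proposal is correct and essentially identical to the paper's proof; the power-series alternative you sketch at the end is a reasonable bonus but not needed.
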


\begin{proof}
 We will show that for any triangular path $T$ in the plane,

\begin{equation}
 \int_{T} \tau_{z}(A) \, dz = 0 \, .
\end{equation}
Let $\epsilon > 0$. Choose $R$ large enough that $T$ is contained in the ball of radius $R$ centered at the origin. Choose $m > \exp(C_{1}MR)$ such that $|\text{spt}(A)| \cdot \left\Vert A \right\Vert C_{2} e^{-m} < \frac{\epsilon}{\text{length}(T)}$ and $L$ as in the proof above. Then,

\begin{equation}
\begin{split}
 \left\Vert \int_{T} \tau_{z}(A) \, dz \right\Vert & \leq \left\Vert \int_{T} \left( \tau_{z}(A) - \tau^{\Lambda_{L}}_{z}(A) \right) \, dz + \int_{T} \tau^{\Lambda_{L}}_{z}(A) \, dz \right\Vert \\
	& \leq \int_{T} \left\Vert \tau_{z}(A) - \tau^{\Lambda_{L}}_{z}(A) \right\Vert \, dz + \left\Vert \int_{T} \tau^{\Lambda_{L}}_{z}(A) \, dz \right\Vert \\
	& \leq \epsilon \, .
\end{split}
\end{equation}

By Morera's Theorem, $\tau_{z}(A)$ is entire analytic.

\end{proof}

\section{Dimensions Greater Than One and the Eden Growth Model}
If a bound of the form \eqref{perbd} holds in dimensions greater than one, then a completely analogous argument would give the same result in higher dimensions. Is there any reason to hope that such a bound on $\bar{p}_{j}$ is true? Our procedure for constructing lattice animal histories is very similar to a discrete-step Markov process first considered by Murray Eden in \cite{Eden} and now referred to as an \textit{Eden growth process}. In an Eden growth process on $\mathbb{Z}^{d}$, the state space at step $n$ is the collection of all lattice animals of size $n+1$ containing the origin. Given a lattice animal $L(n-1)$ containing $n$ lattice points, the possible lattice animals at step $n$ are those which can be realized by adding a lattice point from the perimeter of $L(n-1)$. (Here a lattice point $y$ is on the perimeter of $L(n-1)$ if the nearest-neighbor pair $\{ x,y \}$ is a perimeter edge (as defined previously) for some $x \in L(n-1)$.) The probability of choosing any particular lattice point $y$ on the perimeter is 

\begin{equation}
\frac{\text{number of perimeter edges containing $y$}}{p(L(n-1))} \, .
\end{equation}
\\

Computer simulations of an Eden growth process demonstrate that the typical lattice animal containing a large number of lattice points grown by such a method is very nearly a ball in dimension $2$ \cite{Stauffer}. Establishing a rigorous upper bound on the expected perimeter in an Eden growth process is a very interesting problem in its own right. Using results of Kesten \cite{Kesten} on first-passage percolation and a method of Richardson \cite{Richardson} for associating an Eden growth process with a continuous-time process, it is straightforward to obtain the following result on the expected perimeter \cite{Bouch}.

\begin{theorem}
 The expected perimeter in a $d$-dimensional Eden growth process is bounded above by $Kn^{1-\frac{1}{d(2d+5) +1}}$ for some constant $K$.
\end{theorem}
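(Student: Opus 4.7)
The plan is to realize the Eden growth process as the jump chain of a continuous-time first-passage percolation (FPP) model, as in Richardson, and then import quantitative shape-theorem bounds due to Kesten to show that the perimeter of the grown cluster is concentrated in a thin shell around the asymptotic shape.

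First I would attach to each nearest-neighbor edge of $\mathbb{Z}^d$ an independent $\mathrm{Exp}(1)$ clock, and let $T(x)$ be the minimum total weight over lattice paths from the origin to $x$. Setting $B(t)=\{x:T(x)\leq t\}$, the memorylessness of the exponential shows that the embedded jump chain $\bigl(B(T_n)\bigr)_{n\geq 0}$ is exactly the $d$-dimensional Eden growth process, so bounding the expected perimeter at step $n$ is equivalent to bounding $E\bigl[p(B(T_n))\bigr]$, where $T_n$ is the $n$-th jump time. Note also that, conditional on $B(T_{n-1})=L$, we have $T_n-T_{n-1}\sim \mathrm{Exp}(p(L))$, which is the hook relating perimeter to continuous time.

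Next I would invoke the Cox--Durrett--Kesten shape theorem together with Kesten's quantitative deviation estimates for FPP. These yield a deterministic convex set $B_0$ and a constant $\mu>0$ such that, for some $\gamma>0$ (with $\gamma$ arising from Kesten's fluctuation exponent, essentially $1/(2d+5)$) and some $\gamma'>0$,
\begin{equation}
 P\Bigl((1-t^{-\gamma})t\mu^{-1}B_0 \subseteq B(t) \subseteq (1+t^{-\gamma})t\mu^{-1}B_0\Bigr) \geq 1 - e^{-ct^{\gamma'}}
\end{equation}
for all large $t$. On this high-probability event, the perimeter $p(B(t))$ is bounded by $2d$ times the number of lattice points in the shell $S_t:=\bigl\{x\in\mathbb{Z}^d: (1-t^{-\gamma})t\mu^{-1}B_0\not\ni x \in (1+t^{-\gamma})t\mu^{-1}B_0\bigr\}$, and a standard volume computation gives $|S_t|\leq Ct^{d-1}\cdot t^{1-\gamma}=Ct^{d-\gamma}$. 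On the complementary bad event, use the trivial bound $p(B(t))\leq 2d|B(t)|$ together with an exponential-tail estimate on $|B(t)|$ (available once more from the FPP shape theory); the super-polynomial decay $e^{-ct^{\gamma'}}$ swallows the polynomial growth of $|B(t)|$, so the bad event contributes negligibly to the expectation. This gives $E[p(B(t))]\leq C't^{d-\gamma}$.

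Finally I would pass to discrete time using the relation $|B(T_n)|=n$. Inverting the shape theorem, $T_n$ is concentrated around a deterministic value of order $n^{1/d}$, with fluctuations once again controlled by Kesten's bounds; this fluctuation in $T_n$ is what degrades the shell exponent from $\gamma/d$ to $1/(d(2d+5)+1)$, the $+1$ in the denominator reflecting the combined cost of the spatial and temporal deviations. Substituting $t\sim n^{1/d}$ then yields $E[p(B(T_n))]\leq Kn^{1-1/(d(2d+5)+1)}$, which is the claim.

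The main obstacle will be importing Kesten's FPP deviation estimates in precisely the form needed: Kesten's statements are typically about passage times $T(0,x)$ and not directly about the perimeter of $B(t)$, so some care is required to upgrade a bound of the form $P(|T(0,x)-\mu|x|)|>s)$ to the two-sided shell containment above with the stated exponent $\gamma$. A related technical point is handling the expectation (rather than a high-probability bound), which requires the tail $e^{-ct^{\gamma'}}$ to dominate both the trivial perimeter bound $\leq Cn$ and any combinatorial factors coming from summing over possible large clusters; this is standard but must be checked carefully to justify the constant $K$.
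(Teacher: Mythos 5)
Your strategy --- Richardson's coupling of the Eden process with exponential-weight first-passage percolation, followed by Kesten's quantitative fluctuation estimates to confine the perimeter to a thin shell around the limit shape --- is precisely the route the paper indicates for this theorem, which it does not actually prove here but defers to the companion paper \cite{Bouch}. Your outline is sound, and the two technical points you flag (upgrading point-to-point passage-time concentration to a two-sided shape containment, and converting from continuous time $t$ to cluster size $n$ via the sandwich $B(t_-)\subseteq B_n\subseteq B(t_+)$) are exactly the places where the specific exponent $1/(d(2d+5)+1)$ must be extracted.
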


Will this then lead to a locality result? Unfortunately, no. In an Eden growth process, not all histories of length $n$ are equally probable. So the expected perimeter in an Eden growth process is not necessarily the same as the average perimeter over all histories. This latter average perimeter seems particularly difficult to get a handle on, though one might hope that the above result for Eden growth processes is valid for the average over all histories as well. Theorem~\ref{MainTheorem}, however, implies the following.

\begin{theorem}
 No bound of the form $\bar{p}_{j} \leq C_{1}\cdot j^{\alpha}$ with $\alpha < 1$ and $C_{1}$ independent of $j$ exists for the average perimeter taken over all lattice animal histories of length $j-1$.
\end{theorem}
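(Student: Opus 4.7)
I would prove this by contraposition: assume such a bound $\bar{p}_{j} \leq C_{1}\cdot j^{\alpha}$ with $0\leq\alpha<1$ holds uniformly in $j$, and then mimic verbatim the one-dimensional argument of Section~3 to derive entire analyticity of the infinite-volume complex-time evolution for \emph{every} nearest-neighbor interaction on $\mathbb{Z}^{2}$ with bounded interaction norm. This will contradict the Main Theorem, which exhibits a specific translation-invariant nearest-neighbor interaction on $\mathbb{Z}^{2}$ with $\|H_{x_{1},x_{2}}\|=1$ and an observable $A$ at the origin for which $\|\tau^{\Lambda_{j}}_{z}(A)\|\to\infty$ along a subsequence whenever $z$ is purely imaginary with $|z|>4^{21}$.

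Given the hypothesized bound, the inequality \eqref{xnjfirstbound} already gives
\begin{equation*}
 \sum_{j=2}^{n+1} X_{j}^{n} < (2dC_{1})^{n-1}\, n\,(2\pi n)^{\alpha/2} \max_{2\leq j\leq n+1} j^{\,n-\beta(j-1)},\qquad \beta=1-\alpha>0.
\end{equation*}
The next step is to carry out the calculus maximization of $j^{\,n-\beta(j-1)}$ exactly as was done in the one-dimensional proof: setting $f(x)=(n-\beta(x-1))\ln x$, the critical point satisfies $\beta x_{0}(1+\ln x_{0})=n+\beta$, and (using $\beta>0$ fixed) the same chain of estimates $\ln x_{0}+1\geq \tfrac{1}{10}\ln(n+1)$ goes through with constants depending only on $\alpha$. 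One obtains a bound
\begin{equation*}
 \sum_{j=2}^{n+1} X_{j}^{n} \;<\; C_{3}\left(\frac{C_{4}(d,\alpha)}{\ln(n+1)}\right)^{n} n!,
\end{equation*}
which, combined with $\|[H_{x_{n},y_{n}},\ldots,[H_{x_{1},y_{1}},A]]\|\leq (2M)^{n}\|A\|$, yields
\begin{equation*}
 \|\mathcal{C}_{\Lambda}^{n}(A)\| \;<\; C_{3}\,|{\rm spt}(A)|\,\|A\|\left(\frac{C_{5}(d,\alpha)\,M}{\ln(n+1)}\right)^{n} n!.
\end{equation*}

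With this estimate in hand, the $n!$ in the denominator of \eqref{powerseries} wins, and for any $z\in\mathbb{C}$ the tail $\sum_{n\geq m}(|z|^{n}/n!)\|\mathcal{C}_{\Lambda}^{n}(A)\|$ decays as $e^{-m}$ once $m>\exp(C_{6}(d,\alpha)M|z|)$, with a bound independent of $\Lambda\subset\mathbb{Z}^{2}$. I would then reproduce the three lemmas of Section~3 in the two-dimensional setting: a uniform-Cauchy statement analogous to Lemma~\ref{UniformCauchy} along any exhausting sequence $\Lambda_{j}\nearrow\mathbb{Z}^{2}$, continuity of $\tau_{z}(A)$, and Morera's theorem to conclude $\tau_{z}(A)=\lim_{j}\tau^{\Lambda_{j}}_{z}(A)$ is entire. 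In particular, $\|\tau^{\Lambda_{j}}_{z}(A)\|$ must be uniformly bounded on compact subsets of $\mathbb{C}$, which directly contradicts the divergence asserted in the Main Theorem.

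The main obstacle is really a pre-existing one: all of the analytic work has already been done in Section~3 for the case $\alpha=0$, $\bar p_{j}=2$. The only genuinely new content is verifying that the maximization argument for $j^{n-\beta(j-1)}$ survives with $\beta<1$ (rather than $\beta=1$), and that the constants $C_{4},C_{5},C_{6}$ can be chosen to depend only on $d$ and $\alpha$, not on $n$ or on the interaction. This is routine; the key input doing the real work is the Main Theorem, and the present statement is essentially its immediate corollary.
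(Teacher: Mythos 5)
Your proposal is correct and is essentially the paper's own argument: the paper presents this theorem as an immediate corollary of the Main Theorem together with the reduction already set up in Section~2 (``if a bound of the form \eqref{perbd} holds in dimensions greater than one, then a completely analogous argument would give the same result in higher dimensions''), which is exactly the contrapositive you spell out. The only content you add is the routine verification that the maximization of $j^{\,n-\beta(j-1)}$ with $\beta=1-\alpha>0$ goes through as in the one-dimensional case, which the paper leaves implicit.
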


\section{Lattice Trees}

Obtaining a useful upper bound on the number of terms in $\mathcal{C}_{\Lambda}^{n}(A)$ by analyzing the average perimeter over all lattice animal histories appears to be a difficult way to move forward. In an Eden growth process one can make connections with continuous time processes and some machinery is available to work with. But such tools don't seem to be available for the average we are interested in. So in this section we present an alternate approach to counting the number of terms in the $n$-fold commutator $\mathcal{C}_{\Lambda}^{n}(A)$.\\

As we have noted already, each nearest-neighbor pair of lattice points in $\mathbb{Z}^{d}$ defines an edge of the cubical lattice in $\mathbb{Z}^{d}$. Thus, each commutator sequence defines a subgraph of the cubical graph in $\mathbb{Z}^{d}$. Each of these subgraphs contains at least one maximal spanning tree. It is known that the number of lattice trees with $k$ edges and containing the origin is bounded above by a function of the form $C^{k}$ for some constant $C$ depending only on the dimension $d$ \cite{Madras}. The potential utility of this fact can be seen from the following analysis.\\

Given a lattice tree with $k$ edges and containing the origin, how many orderings of the edges correspond to lattice animal histories of length $k$? In other words, given a fixed lattice tree with $k$ edges and rooted at the origin, how many ways can we ``construct'' this tree one edge at a time such that the first edge contains the origin and at each step we always have a tree?\\

Suppose that for all lattice trees with $k$ edges and rooted at the origin the number of ways of ``constructing'' such a tree is bounded above by

\begin{equation}\label{treebnd}
 \frac{(C_{1})^{k}k!}{(\ln k)^{k}} \, .
\end{equation}

Given a fixed lattice tree $T$ with $k$ edges and rooted at the origin, how many commutator sequences of length $n$ such that the induced subgraph is spanned by $T$ are there? It is relatively straightforward to obtain an upper bound. 
\\

In the commutator sequence we have at most $\binom{n}{k}$ choices for the positions of the edges (that is, nearest-neighbor pairs) associated with the spanning tree $T$. For each of these choices of $k$ locations, the edges in the spanning tree can be ordered in at most $\frac{(C_{1})^{k}k!}{(\ln k)^{k}}$ different ways. The remaining $n-k$ locations in the commutator sequence must be filled with edges whose associated vertices are contained in $T$. As indicated earlier, there are no more than $kd$ such nearest-neighbor pairs. Therefore we obtain an upper bound of

\begin{equation}
 \frac{(C_{1})^{k}k!}{(\ln k)^{k}}\binom{n}{k}(kd)^{n-k} \, .
\end{equation}
\\

The maximum value of this expression as $k$ ranges from $2$ to $n$ can be found using simple calculus as we did in the proof of the one-dimensional case. In this case we find that the maximum value is bounded above by

\begin{equation}
 \frac{(C_{2})^{n}n!}{[\ln(\ln n)]^{n}} \, .
\end{equation}

Since every commutator sequence of length $n$ can be associated with an underlying spanning tree containing at most $n$ edges, and since the number of rooted lattice trees grows at most exponentially, a bound of the form \eqref{treebnd} would lead to the result that the total number of commutator sequences of length $n$ is no more than

\begin{equation}
 \frac{(C_{3})^{n}n!}{[\ln(\ln n)]^{n}} \, .
\end{equation}

This would be strong enough to give us a slightly modified version of the locality result we found for one-dimensional systems.

\section{Counterexample}
In this section we will demonstrate that no such bound of the form \eqref{treebnd} exists for lattice trees in dimensions greater than one.

\begin{theorem}\label{Treethm}
 There exists an increasing sequence of positive integers $(n_{1},n_{2},n_{3}, \ldots)$ and a constant $C$ such that for each $n_{j}$ there exists a $2$-dimensional rooted lattice tree $T_{j}$ containing $n_{j}$ edges and having the property that the number of ways of constructing $T_{j}$ is greater than
\begin{equation}
 \frac{n_{j}!}{C^{n_{j}}} \, .
\end{equation}

\end{theorem}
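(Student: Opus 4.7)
\textbf{Proof plan for Theorem~\ref{Treethm}.}
The approach is to translate the counting problem to an estimate on subtree sizes via the classical hook-length formula for linear extensions of the edge poset of a rooted tree. Concretely: orient each edge of the rooted tree $T$ away from the root, and for each edge $e$ let $s(e)$ denote the number of edges (including $e$) in the subtree of $T$ lying below $e$. Then the number of valid orderings $(e_1,\dots,e_n)$ of the edges of $T$---orderings in which $e_1$ is incident to the root and each subsequent $e_i$ is adjacent to the subtree already built---is exactly $n!/\prod_{e\in T}s(e)$. Hence the theorem reduces to producing lattice trees $T_j\subset\mathbb{Z}^2$ with $n_j$ edges and $n_j\to\infty$ satisfying
\[
   \prod_{e\in T_j}s(e)\;\le\;C^{n_j}
\]
for some universal constant $C$.

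For this product bound to hold, the average of $\log s(e)$ must be $O(1)$, which requires the tree to be ``bushy'': a positive fraction of its edges must be close to leaves. A combinatorial binary tree already has $\prod s(e)\le 4^n$ by a short induction on the recursion $S_d=1+2S_{d+1}$ for subtree sizes. The obstruction is geometric: the complete binary tree of depth $k$ has $2^k$ leaves and cannot fit in a lattice disk of radius less than order $2^{k/2}$, so any realization of it in $\mathbb{Z}^2$ forces some lattice strands to be long, and each such strand of length $\ell$ contributes an $\ell!$ factor to the product.

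I would therefore build $T_j$ by a recursive fractal procedure that spreads the branching across as many generations as possible. Starting from a small seed tree $T_0$, form $T_{j+1}$ by attaching a short branching gadget at each current leaf of $T_j$, with the gadget's orientation chosen (e.g.\ alternating horizontal and vertical at successive generations, or rotating through the four lattice directions) so that the resulting lattice set is self-avoiding. Although the innermost strands of $T_j$ may have to be long to accommodate all descendant subtrees, one arranges that an overwhelming fraction of lattice edges live in the outer gadgets, which are short and branch at every step. By tuning so that the number of leaves at generation $j$ is a positive fraction of $n_j$, one verifies inductively
\[
   \prod_{e\in T_{j+1}}s(e)\;\le\;C^{n_{j+1}-n_j}\cdot \Bigl(\prod_{e\in T_j}s(e)\Bigr)^{\alpha}
\]
with $\alpha$ close to $1$, whose solution gives $\prod s(e)\le C^{n_j}$.

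The principal obstacle will be calibrating this construction precisely. Any fractal embedding in $\mathbb{Z}^2$ of a tree with exponentially many leaves has some long strands near the root, and a careless choice of gadget sizes causes the subtree-size product to grow like $e^{n\log n}$ (as happens, for instance, for the naive $H$-fractal with geometric edge-length scaling). The remedy is to match the length of each long inner strand with a proportionate number of short leaf-strands further out, so that in the sum $\sum_{e}\log s(e)$ the contribution of the long strands is dominated by that of the short ones. Setting up this bookkeeping and verifying the resulting bound via Stirling's formula, along the lines of the one-dimensional argument in Section~3, is the technical heart of the argument.
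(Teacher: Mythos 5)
Your reduction is exactly the one the paper uses: the formula $n!/\prod_{e}s(e)$ for the number of admissible edge orderings is the paper's Lemma on constructions of a rooted tree (with $s(e)$ equal to the paper's weight $w(e)$), so the theorem does reduce to exhibiting lattice trees with $\prod_{e}s(e)\le C^{n}$, and your diagnosis of the obstruction (a binary tree satisfies the product bound but cannot be embedded in $\mathbb{Z}^2$ without long strands) is also correct. But the construction is the entire content of the theorem, and your proposal stops at the point where it would have to be produced. The recursion you posit,
\begin{equation*}
\prod_{e\in T_{j+1}}s(e)\;\le\;C^{\,n_{j+1}-n_j}\Bigl(\prod_{e\in T_j}s(e)\Bigr)^{\alpha},
\end{equation*}
is not derived from any concrete family of trees, and the picture behind it --- short, bounded-size branching gadgets attached at the current leaves, so that ``a positive fraction of edges are close to leaves'' --- cannot be realized in $\mathbb{Z}^2$: that picture is combinatorially a bounded-degree tree with exponentially many leaves per generation, which is precisely the object you correctly ruled out. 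Any planar realization forces the strand lengths to grow with the scale, and the whole difficulty is to quantify how fast they may grow while keeping $\sum_e\log s(e)=O(n)$.

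What actually works (and what the paper does) is quite rigid and is not recoverable from your sketch by ``bookkeeping'': the teeth hanging off a segment of a cluster of size $m$ must themselves be recursively built clusters of size roughly $(\log m)^2$, spaced roughly $(\log\log m)^2$ apart, which forces the hierarchy of scales to grow as a tower of exponentials ($E_k=4^{\sqrt{E_{k-1}}}$); a single level of this construction (a caterpillar with teeth of length $(\log n)^2$) already fails, giving $\prod_e s(e)\approx(\log n)^{2n}$, and only the full $k$-level telescoping turns this into $E_1^{E_k}\cdot(\text{small corrections})\le C^{n}$. Your remark that long inner strands must be ``matched'' by short outer ones is the right heuristic, but without specifying the scales there is no way to check it, and the most natural choices (geometric scaling, as you note for the $H$-fractal, or bounded gadgets) fail. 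So the proposal identifies the correct lemma and target but leaves a genuine gap at the step that constitutes the proof: exhibiting a concrete sequence of lattice trees and verifying the exponential bound on the weight product.
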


We begin with a definition and a lemma. Given a rooted tree, a \textit{descendant} of an edge $e$ is any edge that can be reached after $e$ if one is traveling ``away'' from the root. We define the \textit{weight} of an edge $e$, $w(e)$, to be one plus the number of descendants of $e$. Then, we have the following lemma (communicated to me by Elizabeth Kupin).

\begin{lemma}
 Let $\{e_{1},e_{2},\ldots,e_{n}\}$ be the edges in a rooted tree $T$. The number of ways of constructing $T$ is given by
\begin{equation}\label{treeorders}
 \displaystyle{\frac{n!}{\prod_{j=1}^{n}w(e_{j})}} \, .
\end{equation}

\end{lemma}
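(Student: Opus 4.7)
The plan is to reduce the statement to the classical hook length formula for linear extensions of rooted forests. First, I would set up the following bijection: identify each edge $e$ of $T$ with its endpoint $v(e)$ that is farther from the root (its ``child endpoint''). Under this identification, the $n$ edges correspond to the $n$ non-root vertices of $T$, and the descendants of $e$ in the edge-sense (edges reachable by moving away from the root through $e$) correspond exactly to the strict descendants of $v(e)$ in the usual vertex-sense, so that $w(e) = h(v(e))$, where $h(v)$ denotes the number of vertices in the subtree of $T$ rooted at $v$.

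Next, I would verify the combinatorial equivalence: an ordering $e_{\sigma(1)}, \ldots, e_{\sigma(n)}$ corresponds to a valid construction of $T$ (i.e.\ the partial subgraph after each step is a tree containing the root) if and only if, in the image ordering $v(e_{\sigma(1)}), \ldots, v(e_{\sigma(n)})$, every non-root vertex appears after its parent. Hence the number of constructions of $T$ equals the number of linear extensions of the poset on non-root vertices induced by the ancestor relation --- equivalently, linear extensions of the rooted forest obtained from $T$ by deleting the root.

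Then I would prove by induction on $n$ that any rooted forest on $n$ vertices admits exactly $n!/\prod_v h(v)$ linear extensions. For a single tree with root $r$ and children subtrees $T_1, \ldots, T_k$ of sizes $n_1, \ldots, n_k$, the vertex $r$ must be placed first, and the remaining positions are filled by interleaving linear extensions of the $T_i$, giving
\begin{equation}
N(T) = \binom{n-1}{n_1, \ldots, n_k} \prod_{i=1}^{k} \frac{n_i!}{\prod_{v \in T_i} h(v)} = \frac{(n-1)!}{\prod_{v \neq r} h(v)} = \frac{n!}{\prod_v h(v)},
\end{equation}
where the last equality uses $h(r) = n$. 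The forest case reduces to this same multinomial interleaving on its constituent trees. Translating back through the bijection, with $h(v(e_j)) = w(e_j)$ and noting that the deleted root contributes no factor, yields \eqref{treeorders}.

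The routine combinatorics here is standard, and I expect no serious obstacle. The step requiring the most care is the bijection together with the claim that ``every intermediate subgraph is a tree containing the root'' is equivalent to the ancestor condition on non-root vertices; once this is established, the hook-length identity for rooted forests does essentially all the remaining work.
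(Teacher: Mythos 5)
Your proposal is correct and is essentially the same argument as the paper's: the paper also proves \eqref{treeorders} by induction, forcing the unique edge at the root to come first in the single-child case and otherwise interleaving constructions of the root's subtrees $T_1,\ldots,T_m$ with a multinomial coefficient, exactly as in your displayed computation. Your edge-to-child-vertex bijection and the framing via linear extensions and the hook length formula for rooted forests are a standard repackaging of that same induction rather than a genuinely different route.
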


\begin{proof}
 We argue by induction. Clearly the lemma is true for trees with one edge. Suppose the lemma is true for all rooted trees with at most $n-1$ edges. Let $T$ be a rooted tree with $n$ edges. If there is exactly one edge coming out of the root of the tree, say $e_{1}$, then $e_{1}$ must be chosen first in the construction. Therefore, the number of ways of constructing $T$ is equal to the number of ways of constructing the reduced rooted tree formed by removing $e_{1}$ and making the remaining vertex of $e_{1}$ the new root. But,

\begin{equation*}
\begin{split}
 \displaystyle{\frac{n!}{\prod_{j=1}^{n}w(e_{j})}}& =\displaystyle{\frac{n(n-1)!}{w(e_{1})\prod_{j=2}^{n}w(e_{j})}}\\
	& =\displaystyle{\frac{n(n-1)!}{n\prod_{j=2}^{n}w(e_{j})}}\\
	& =\displaystyle{\frac{(n-1)!}{\prod_{j=2}^{n}w(e_{j})}} \, .
\end{split}
\end{equation*}

By the inductive step, this last expression is equal to the number of ways of constructing the reduced rooted tree.\\

Now suppose there are $m$ edges coming out of the root. Each of these edges and their respective descendants form a subtree rooted at the same vertex as $T$. Label these subtrees $T_{1},\ldots,T_{m}$. Each construction of $T$ comes from choosing a construction for each of $T_{1},\ldots,T_{m}$ and then interlacing these sequences (preserving, of course, the ordering of the edges in each subtree) to form a sequence of length $n$ containing all of the edges in $T$. Let subtree $T_{j}$ contain $n_{j} \geq 1$ edges. Then the number of ways to choose the $n_{1}$ locations for the edges in subtree $T_{1}$, the $n_{2}$ locations for the edges in subtree $T_{2}$, etc. is

\begin{equation}
 \binom{n}{n_{1}}\binom{n-n_{1}}{n_{2}}\dots\binom{n-(n_{1}+\ldots+n_{m-1}}{n_{m}} \, .
\end{equation}

Further, by induction, the number of ways to order the edges in subtree $T_{j}$ is

\begin{equation}
 \displaystyle{\frac{n_{j}!}{\prod_{k_{j}=1}^{n_{j}}w(e_{k_{j}}^{j})}} \, .
\end{equation}

Therefore, the number of ways of constructing $T$ is

\begin{equation}
 \displaystyle{\frac{n_{1}!}{\prod_{k_{1}=1}^{n_{1}}w(e_{k_{1}}^{1})}\dots\frac{n_{m}!}{\prod_{k_{m}=1}^{n_{m}}w(e_{k_{m}}^{m})}\cdot\binom{n}{n_{1}}\binom{n-n_{1}}{n_{2}}\dots\binom{n-(n_{1}+\ldots+n_{m-1}}{n_{m}}} \, .
\end{equation}

This is expression is equal to \eqref{treeorders}. This proves the lemma.
\end{proof}

With this lemma in hand, we are ready to construct the sequence of trees that will prove the theorem.

\begin{proof}[Proof of Theorem]

 We begin by considering a lattice tree that looks like that shown in Figure~\ref{tree1}. In that figure, the tree should be thought of as having approximately $n$ total edges, with each vertical segment consisting of approximately $(\log n)^{2}$ edges and with approximately $4(\log (\log n)^{2})^{2}$ edges between the bases of consecutive vertical segments. (Througout this section $\log x$ means $\log_{4} x$.) Then, the length of the long horizontal segment, $f(n)$, must (approximately) satisfy the following equations.

\begin{figure}
 \begin{center}

\includegraphics{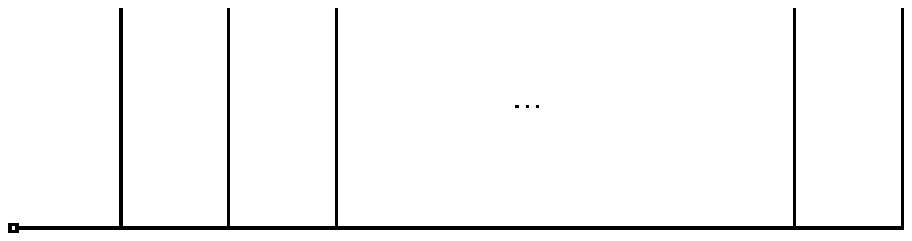}
\caption{}
\label{tree1}
\end{center}
\end{figure}

\begin{gather}\label{fn}
 \displaystyle{\left[ \frac{f(n)}{4(\log (\log n)^{2})^{2}}\right] (\log n)^{2} + f(n) = n}\\
\displaystyle{f(n)\cdot \left[ \frac{(\log n)^{2}}{4(\log (\log n)^{2})^{2}}+1\right] =n}\\
\displaystyle{f(n)\cdot \left[ \frac{(\log n)^{2}+4(\log (\log n)^{2})^{2}}{4(\log (\log n)^{2})^{2}}\right] =n}\\
\displaystyle{f(n)=\frac{4n(\log (\log n)^{2})^{2}}{(\log n)^{2}+4(\log (\log n)^{2})^{2}} \leq \frac{4n(\log (\log n)^{2})^{2}}{(\log n)^{2}}}
\end{gather}

where $\log(\log n)^{2} > 0$.\\

With this calculation in mind, for some large $n$, let each vertical segment consist of $(\log n)^{2}$ edges, let the long horizontal segment consist of $\displaystyle{\frac{4n(\log (\log n)^{2})^{2}}{(\log n)^{2}}=:l(n)}$ edges, and let the number of edges between the bases of consecutive vertical segments be $4(\log (\log n)^{2})^{2}$. (We will choose $n$ such that $l(n)$ is an integer.) Then the total number of edges in the lattice tree is

\begin{equation}
 n + \displaystyle{\frac{4n(\log (\log n)^{2})^{2}}{(\log n)^{2}}} \approx n.
\end{equation}
\\

Now we calculate the product of the weights of the edges. The product of the weights of the edges along all of the vertical segments is

\begin{equation}\label{vertbound1}
 \displaystyle{\left[ ((\log n)^{2})! \right]^{\frac{n}{(\log n)^{2}}}} \, .
\end{equation}

The product of the weights of the edges along the long horizontal segment is

\begin{multline}\label{spineproduct}
 (n+l(n)) \cdot (n+ l(n) -1) \cdot (n+l(n)-2) \cdot \ldots \cdot [n+l(n)-4(\log (\log n)^{2})^{2}+1]\\
\cdot [n+l(n)-4(\log (\log n)^{2})^{2}-(\log n)^{2}] \cdot [n+l(n)-4(\log (\log n)^{2})^{2}-(\log n)^{2}-1] \cdot \ldots \\
\cdot [n+l(n)-8(\log (\log n)^{2})^{2}-(\log n)^{2}+1]\\
\cdot [n+l(n)-8(\log (\log n)^{2})^{2}-2(\log n)^{2}] \cdot \ldots \cdot [n+l(n)-12(\log (\log n)^{2})^{2}-2(\log n)^{2}+1]\\
 \ldots \\
\cdot [4(\log (\log n)^{2})^{2}+(\log n)^{2}] \cdot \ldots \cdot [(\log n)^{2}+1] \, .
\end{multline}

For $n$ large enough, say $n > 1000$, this expression can be bounded above by

\begin{equation}\label{spinebound1}
\displaystyle{\left[ \left(\frac{n}{(\log n)^{2} }\right)! \cdot \left[ 2(\log n)^{2} \right]^{\frac{n}{(\log n)^{2} }} \right]^{4(\log (\log n)^{2})^{2}}} \, .
\end{equation}

We note the following estimate for $N!$.

\begin{equation}\label{factest}
 N! < 2\sqrt{2\pi N}N^{N}e^{-N} < 6\sqrt{N}N^{N}e^{-N}
\end{equation}

Using \eqref{factest}, \eqref{spinebound1} can be bounded by

\begin{equation}
\Bigg[ 6 \left( \frac{n}{(\log n)^{2}} \right) ^{\frac{1}{2}} \cdot \left( \frac{n}{(\log n)^{2}} \right) ^{\frac{n}{(\log n)^{2}}} \cdot e^{\frac{-n}{(\log n)^{2}}} \cdot \left[2(\log n)^{2}\right]^{\frac{n}{(\log n)^{2}}} \Bigg]^{4(\log (\log n)^{2})^{2}}
\end{equation}

\begin{equation}\label{spinebound2}
\begin{split}
& = \left[ 6[g(n)]^{\frac{1}{2}}e^{-g(n)}(2n)^{g(n)} \right]^{4(\log (\log n)^{2})^{2}}\\
& = 6^{4(\log (\log n)^{2})^{2}} \left[ g(n) \right]^{2\left( \log (\log n)^{2} \right)^{2}}e^{-l(n)}(2n)^{l(n)}
\end{split}
\end{equation}

where

\begin{equation}
g(n) \coloneqq \frac{n}{(\log n)^{2}} \, .
\end{equation}

Next note that

\begin{equation}
\displaystyle{\left[ g(n) \right]^{2\left( \log (\log n)^{2} \right)^{2}} < n^{2\left( \log (\log n)^{2} \right)^{2}} = e^{2\left(\ln n \right)\left( \log (\log n)^{2} \right)^{2}} < e^{\frac{n}{(\log n)^{2}}}}
\end{equation}
where the last inequality holds for, say, $\displaystyle{n > 4^{8}}$.\\

Thus, \eqref{spinebound1} can be bounded by

\begin{equation}
\displaystyle{6^{4(\log (\log n)^{2})^{2}} e^{\frac{n}{(\log n)^{2}}-l(n)}(2n)^{l(n)}} \, .
\end{equation}
For $\displaystyle{n > 4^{8}}$, we clearly have $\displaystyle{\frac{n}{(\log n)^{2}} < l(n)}$ \hspace{0.1 in} and \hspace{0.1 in} $\displaystyle{6^{4(\log (\log n)^{2})^{2}} < e^{\frac{4n(\ln 4) \left( \log (\log n)^{2} \right)^{2}}{\log n}}=n^{l(n)}}$. Thus \eqref{spinebound1} is bounded by 

\begin{equation}\label{spinebound3}
 \displaystyle{e^{\frac{12n(\ln 4) \left( \log (\log n)^{2} \right)^{2}}{\log n}}} \, .
\end{equation}

Next we continue our upper bound estimate \eqref{vertbound1} for the vertical edge weights. \eqref{factest} gives

\begin{equation}
\displaystyle{\left[ ((\log n)^{2})! \right]^{\frac{n}{(\log n)^{2}}} < 6^{\frac{n}{(\log n)^{2}}} \cdot (\log n)^{\frac{n}{(\log n)^{2}}} \cdot (\log n)^{2n} \cdot e^{-n}}
\end{equation}

Putting this together with \eqref{spinebound3} we obtain the following upper bound on the product of all the edge weights when $n$ is large. We will use $\displaystyle{n \geq 4^{4^{10}}}$:

\begin{equation}
\begin{split}
& \displaystyle{6^{\frac{n}{(\log n)^{2}}} \cdot (\log n)^{\frac{n}{(\log n)^{2}}} \cdot (\log n)^{2n} \cdot e^{-n} \cdot e^{\frac{12n(\ln 4) \left( \log (\log n)^{2} \right)^{2}}{\log n}}} \\
& = \displaystyle{6^{\frac{n}{(\log n)^{2}}} \cdot (\log n)^{\frac{n}{(\log n)^{2}}} \cdot (\log n)^{2n} \cdot \left[ e^{\frac{12(\ln 4) \left( \log (\log n)^{2} \right)^{2}}{\log n} - 1} \right] ^{n}} \\
& \leq \displaystyle{(6\log n)^{\frac{n}{(\log n)^{2}}} \cdot (\log n)^{2n} \cdot e^{-\frac{24}{25}n}} \\
& \leq \displaystyle{e^{-\frac{1}{2}n}(\log n)^{2n}}
\end{split}
\end{equation}

\begin{figure}
 \begin{center}

\includegraphics{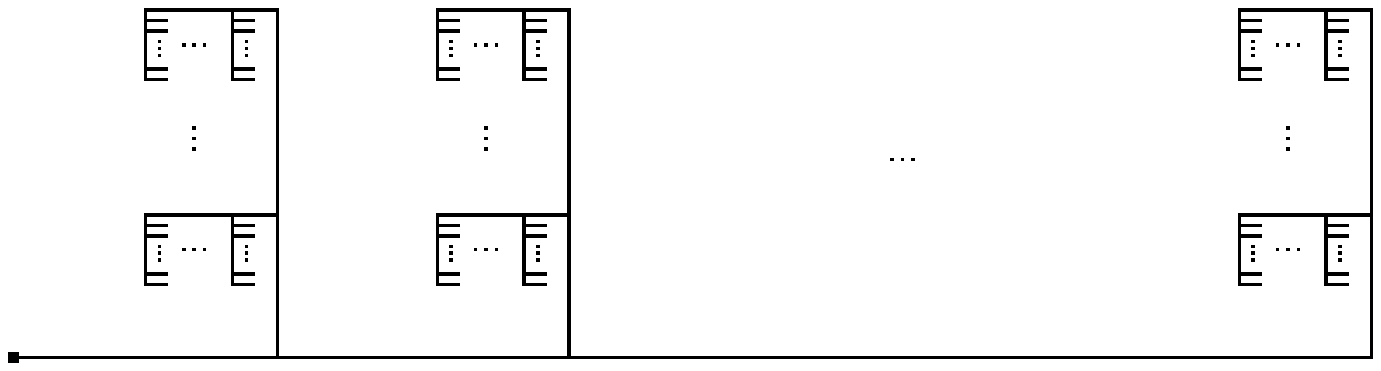}
\caption{}
\label{tree2}
\end{center}
\end{figure}

Now we would like to generalize somewhat the lattice tree shown in Figure~\ref{tree1}. Consider the lattice tree shown in Figure~\ref{tree2}. As in Figure~\ref{tree1}, horizontal and vertical line segments represent many edges that lie end-to-end. Define

\begin{equation}
L(x) \coloneqq (\log_{4}x)^{2}, \hspace{0.25 in} L_{k}(x) \coloneqq \underbrace{L \circ \ldots \circ L}_{k \hspace{0.05 in} \text{times}}(x) \, .
\end{equation}

Further, define
\begin{align}
E(x)& \coloneqq 4^{x}\\
E_{0}& \coloneqq 400\\
E_{1}& \coloneqq 4^{20}\\
E_{k}& \coloneqq E(E_{k-1}^{\frac{1}{2}}) \hspace{0.2 in} \text{for} \hspace{0.05 in} k \geq 2 \, .
\end{align}

Then

\begin{equation}
L_{j}(E_{k}) = E_{k-j} \hspace{0.2 in} \text{for} \hspace{0.05 in} j < k \, .
\end{equation}

Suppose the lattice tree shown in Figure~\ref{tree2} has $n_{k-1} \coloneqq \displaystyle{E_{k} \left[ 1+4 \left( \frac{E_{0}}{E_{1}}+\frac{E_{1}}{E_{2}}+...+\frac{E_{k-2}}{E_{k-1}} \right) \right]}$ edges, and that the figure is composed from horizontal and vertical segments of $k$ different lengths. That is, the shortest path from the root to any leaf will make $k-1$ "left turns". The total number of edges in each cluster attached to the longest horizontal segment is $n_{k-2}$, in each cluster attached to the longest vertical segment is $n_{k-3}$, etc. until the final clusters, which look just like the lattice trees shown in Figure~\ref{tree1} and have $\displaystyle{n_{1}=E_{2}+\frac{4E_{2}E_{0}}{E_{1}}}$ edges. The number of edges in the horizontal and vertical line segments, beginning with the shortest, is given by

\begin{align}
l_{1}& = E_{1}\\
l_{j}& =\displaystyle{\frac{4E_{j}E_{j-2}}{E_{j-1}}} \hspace{0.1 in} \text{for} \hspace{0.05 in} 2 \leq j \leq k \, .
\end{align}

In addition, along a segment of length $l_{j}$, consecutive segments of length $l(j-1)$ branching off of it are separated at their bases by $4E_{j-2}$ edges. We refer to a segment of length $l_{j}$ as a $j^{\text{th}}$\textit{-level segment} and a $j^{\text{th}}$-level segment with all of its "descendants" as a $(j-1)^{\text{th}}$\textit{-level cluster}.\\

Now we calculate an upper bound for the product of the weights of the edges. Using the result of our calculation for the lattice tree in Figure~\ref{tree1} with $n=E_{2}$, we find that the product of the weights of the edges in a $1^{\text{st}}$-level cluster, $W_{2}$, is given by

\begin{equation}
\displaystyle{W_{2}=e^{-\frac{1}{2}E_{2}}E_{1}^{E_{2}}} \, .
\end{equation}

Along each $3^{\text{rd}}$-level segment we have $\frac{E_{3}}{E_{2}}$ $1^{\text{st}}$-level clusters. The product of the weights of the edges in a $3^{\text{rd}}$-level segment can be computed in an almost identical fashion as was done in \eqref{spineproduct} and \eqref{spinebound1}. A simple calculation shows that \eqref{spinebound3} can be used with $n=E_{3}$, and we find that the product of the weights of the edges on each $2^{\text{nd}}$-level cluster is bounded above by

\begin{equation}
\displaystyle{\left[ e^{-\frac{1}{2}E_{2}}E_{1}^{E_{2}} \right] ^{\frac{E_{3}}{E_{2}}} \cdot \left[ E\left(\frac{12E_{1}}{\sqrt{E_{2}}}\right)\right]^{E_{3}}} \, .
\end{equation}

If we continue in this way we obtain

\begin{equation}
\displaystyle{\left[ \ldots \left[ \left[ \left[ e^{-\frac{1}{2}E_{2}}E_{1}^{E_{2}} \right] ^{\frac{E_{3}}{E_{2}}} \cdot \left[ E\left(\frac{12E_{1}}{\sqrt{E_{2}}}\right)\right]^{E_{3}} \right] ^{\frac{E_{4}}{E_{3}}} \cdot \left[ E \left(\frac{12E_{2}}{\sqrt{E_{3}}}\right)\right]^{E_{4}}\right]^{\frac{E_{5}}{E_{4}}} \cdot \ldots \right]^ {\frac{E_{k}}{E_{k-1}}} \cdot \left[ E\left(\frac{12E_{k-2}}{\sqrt{E_{k-1}}}\right)\right]^{E_{k}}}
\end{equation}

which simplifies to

\begin{equation}
e^{-\frac{1}{2}E_{k}} \cdot E_{1}^{E_{k}} \cdot \left[ E\left(\frac{12E_{1}}{\sqrt{E_{2}}}\right)\right]^{E_{k}} \cdot \ldots \cdot \left[ E\left(\frac{12E_{k-2}}{\sqrt{E_{k-1}}}\right)\right]^{E_{k}}
\end{equation}

\begin{equation}
= e^{-\frac{1}{2}E_{k}} \cdot E_{1}^{E_{k}} \left[ E \left(12 \left( \frac{E_{1}}{\sqrt{E_{2}}} + \ldots + \frac{E_{k-2}}{\sqrt{E_{k-1}}} \right) \right) \right]^{E_{k}} \, .
\end{equation}

The infinite series $\displaystyle{\frac{E_{1}}{\sqrt{E_{2}}} + \ldots + \frac{E_{k-2}}{\sqrt{E_{k-1}}} + \ldots}$ converges extremely quickly to something easily seen to be less than $2\frac{E_{1}}{\sqrt{E_{2}}}$. Therefore, we have an upper bound on the product of the weights of the edges given by

\begin{equation}
e^{-\frac{1}{4}E_{k}}(4^{20})^{E_{k}} < (4^{20})^{E_{k}} < (4^{20})^{n_{k-1}} \, .
\end{equation}

Applying the lemma, the theorem follows.
\end{proof}

\ref{Treethm} demonstrates that the number of commutator sequences grows \textit{too fast} to establish a locality result by summing the norms of the ``iterated commutators" in the Taylor expansion of the time-evolved observable. In the real-time case, where a strong locality result is already known, a significant amount of cancellation must occur when these iterated commutators are summed. The cancellation, in fact, occurs for \textit{every} choice of a local interaction and local observable. Should we, then, expect to find a nearest-neighbor interaction for which it can be \textit{shown} that the iterated commutators add \textit{constructively} in the complex-time case, and therefore violate complex-time locality? That this is possible came as a surprise to the author.

\section{Quantum Spin Systems Which Violate Complex-Time Locality}
We begin with several observations concerning operators on $\displaystyle{\otimes_{j=1}^{N} \mathbb{C}^{2}}$ which are tensor products of Pauli spin matrices. First, recall the definition of the Pauli spin matrices:

\begin{equation}
\sigma_{1}=
\begin{pmatrix}
0& 1\\
1& 0
\end{pmatrix}
\hspace{0.4 in}
\sigma_{2}=
\begin{pmatrix}
0& -i\\
i& 0
\end{pmatrix}
\hspace{0.4 in}
\sigma_{3} =
\begin{pmatrix}
1& 0\\
0& -1
\end{pmatrix}
\, .
\end{equation}
Each of the Pauli spin matrices is self-adjoint, unitary and satisfies the commutation relations neatly summarized by

\begin{equation}
 \left[ \sigma_{a},\sigma_{b} \right] = 2i\epsilon_{abc}\sigma_{c}
\end{equation}
where $\epsilon_{abc}$ is the \textit{Levi-Civita symbol}. To simplify the notation in our counterexample we will make the following definitions.

\begin{equation}
\alpha_{0} \coloneqq \mathbbm{1}
\hspace{0.4 in}
\alpha_{1} \coloneqq \sigma_{3}=
\begin{pmatrix}
1& 0\\
0& -1
\end{pmatrix}
\hspace{0.4 in}
\alpha_{2} \coloneqq \sigma_{1}=
\begin{pmatrix}
0& 1\\
1& 0
\end{pmatrix}
\hspace{0.4 in}
\alpha_{3} \coloneqq -i\sigma_{2}=
\begin{pmatrix}
0& 1\\
-1& 0
\end{pmatrix}
\end{equation}
With these definitions we have the following identities:

\begin{equation}\label{alpharel}
\alpha_{1}\alpha_{2}=-\alpha_{2}\alpha_{1}=\alpha_{3}, \hspace{0.3 in} \alpha_{1}\alpha_{3}=-\alpha_{3}\alpha_{1}=\alpha_{2}, \hspace{0.3 in} \alpha_{3}\alpha_{2}=-\alpha_{2}\alpha_{3}=\alpha_{1}
\end{equation} 
and

\begin{equation}\label{alpharel2}
\alpha_{1}^{2}=\mathbbm{1}, \hspace{0.3 in} \alpha_{2}^{2}=\mathbbm{1}, \hspace{0.3 in} \alpha_{3}^{2}=-\mathbbm{1} \, .
\end{equation}

Let $\Lambda \subset \mathbb{Z}^{2}$ be a large square subset containing the origin. Suppose $|\Lambda|=N$. Then an operator on $\displaystyle{\otimes_{x \in \Lambda} \mathbb{C}^{2}}$ of the form $\displaystyle{\otimes_{x \in \Lambda} \alpha_{f(x)}}$, where $f(x) \in \{ 0,1,2,3 \}$, has \textit{Hilbert-Schmidt norm}

\begin{equation}
\displaystyle{  \left\lVert  \otimes_{x \in \Lambda} \alpha_{f(x)}   \right\rVert  _{HS} = (2^{N})^{\frac{1}{2}}} \, .
\end{equation}
Let $\displaystyle{A=\sum_{\{ f: \Lambda \rightarrow \{ 0,1,2,3 \} \, \} } c^{f} \, \otimes_{x \in \Lambda} \alpha_{f(x)}}$, where $c^{f} \in \mathbb{C}$. Then

\begin{equation}
   \left\lVert  A   \right\rVert   \geq \frac{  \left\lVert  A   \right\rVert  _{HS}}{  \left\lVert  \mathbbm{1}   \right\rVert  _{HS}} \geq |c^{f}|
\end{equation}
for all $f:\Lambda \rightarrow \{ 0,1,2,3 \}$.\\

Consider the $2$-dimensional quantum spin system with nearest-neighbor interactions given by $H_{(a,b),(a+1,b)}=\alpha_{1} \otimes \alpha_{2}$ and $H_{(a,b),(a,b-1)}=\alpha_{1} \otimes \alpha_{2}$ for all $(a,b)$. Given a sufficiently large $\Lambda \subset \mathbb{Z}^{2}$ containing the origin and $A=\alpha_{2}$ an operator on $\otimes_{x \in \Lambda} \mathbb{C}^{2}$ supported at the origin, we want to investigate $\mathcal{C}_{\Lambda}^{n}(A)$. To this end, it will prove useful to consider also the $2$-dimensional lattice tree shown in Figure~\ref{tree3}. This lattice tree is an "unfolded" version of the lattice tree in Figure~\ref{tree2}. In this lattice tree, as one moves away from the root (which we place at the origin), one is always moving either right or down. \\

\begin{figure}
 \begin{center}

\includegraphics{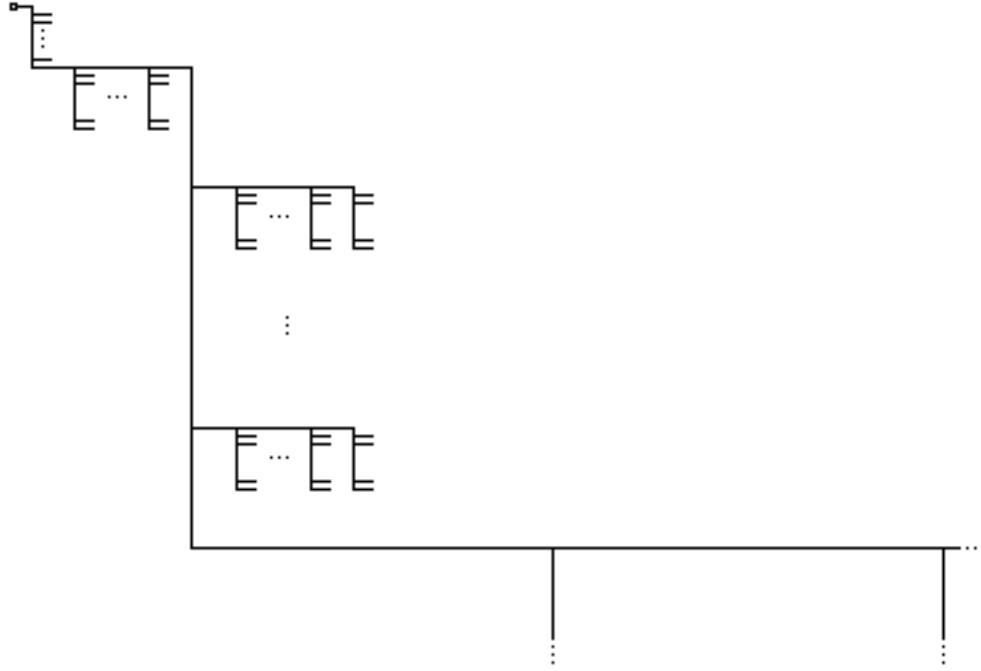}
\caption{The Unfolded Tree}
\label{tree3}
\end{center}
\end{figure}

By slightly modifying the proof of the theorem in the previous section, one can show that the tree of size $n_{j}$, which we will designate $\tilde{T_{j}}$, can be constructed in at least $\frac{n_{j}!}{(4^{21})^{n_{j}}}$ ways. Further, in all of these constructions, the operator associated with the ordered sequence of edges (or lattice animal history) is $B_{j}$, uniquely defined by the following properties:

\begin{enumerate}
 \item It is of the form $2^{n_{j}} \otimes_{x \in \Lambda} \alpha_{f_{j}(x)}$
 \item $f_{j}((0,0))= 3$
 \item $f_{j}(x)= 3$ if $x \in \tilde{T_{j}}$ and degree$(x)=2$
 \item $f_{j}(x)= 2$ if $x \in \tilde{T_{j}}$ and degree$(x)=3$
 \item $f_{j}(x)= 2$ if $x \in \tilde{T_{j}}$, degree$(x)=1$, and $x \neq (0,0)$
 \item $f_{j}(x)= 0$ if $x \notin \tilde{T_{j}}$ \, .
\end{enumerate}

\begin{figure}
\begin{center}
$
\hspace{1.5cm}
 \begin{psmatrix}[emnode=dot,colsep=1.5cm,rowsep=1.2cm]
	& & & &\\
	& & & &\\
	& & & &
  \everypsbox{\scriptstyle}
  \ncline[linewidth=.5pt,linestyle=dotted]{1,1}{1,2}
  \ncline[linewidth=.5pt,linestyle=dotted]{1,2}{1,3}
  \ncline[linewidth=.5pt,linestyle=dotted]{1,3}{1,4}
  \ncline[linewidth=.5pt,linestyle=dotted]{1,1}{2,1}
  \ncline[linewidth=.5pt]{1,2}{2,2}<{b}
  \ncline[linewidth=.5pt]{1,3}{2,3}>{e}
  \ncline[linewidth=.5pt,linestyle=dotted]{1,4}{2,4}
  \ncline[linewidth=.5pt]{2,1}{2,2}_{c}
  \ncline[linewidth=.5pt]{2,2}{2,3}^{a}
  \ncline[linewidth=.5pt]{2,3}{2,4}^{f}
  \ncline[linewidth=.5pt,linestyle=dotted]{2,1}{3,1}
  \ncline[linewidth=.5pt]{2,2}{3,2}>{d}
  \ncline[linewidth=.5pt]{2,3}{3,3}>{g}
  \ncline[linewidth=.5pt,linestyle=dotted]{2,4}{3,4}
  \ncline[linewidth=.5pt,linestyle=dotted]{3,1}{3,2}
  \ncline[linewidth=.5pt,linestyle=dotted]{3,2}{3,3}
  \ncline[linewidth=.5pt,linestyle=dotted]{3,3}{3,4}
 \end{psmatrix}
$
\caption{}
\label{graph1}
\end{center}
\end{figure}

Suppose that neither of the endpoints of the edge labelled $a$ (call them $x_{1}$ and $x_{2}=x_{1}+(1,0)$) in Figure~\ref{graph1} is the origin. If $(e_{1},e_{2}, \ldots ,e_{n})$ is a commutator sequence in $\Lambda$ for which the associated operator $[H_{n},[H_{n-1}, \ldots ,[H_{2},[H_{1},A]] \ldots ]]$ is not the zero operator, what are the requirements on $(e_{1},e_{2}, \ldots ,e_{n})$ so that the edge labelled $a$ can appear in the the $(n+1)^{th}$ position? (That is, so that the operator associated with $(e_{1},e_{2}, \ldots ,e_{n},a)$ is also not the zero operator?) If the sum of the number of appearances in the commutator sequence of $b$ and $c$ is odd, if $a$, $d$, and $e$ have each appeared an even number of times in the commutator sequence, and if the sum of the number of appearances of $f$ and $g$ is even, then, by \eqref{alpharel} and \eqref{alpharel2}, the associated iterated commutator will have an $\alpha_{2}$ in the $x_{1}-$position and an $\mathbbm{1}$ in the $x_{2}-$position. (This is a simple calculation.) Thus, using \eqref{alpharel} again, the operator associated to $(e_{1},e_{2}, \ldots ,e_{n},a)$ will have an $\alpha_{3}$ in the $x_{1}-$position, an $\alpha_{2}$ in the $x_{2}-$position, all of the other positions will remain unchanged, and the overall coefficient will be multiplied by $2$. Similar calculations can be made for all other parities of the edges (or edge combinations). The results are summarized in Table~\ref{opertable}.\\

\begin{table}[t]\label{table}
 \caption{}
 \centering
\begin{tabular}{|m{1 cm}|m{1 cm}|m{1 cm}|m{1 cm}|m{1 cm}|m{1 cm}|m{1 cm}|m{1 cm}|m{1 cm}|m{1 cm}|}
\hline $a$ & $b$ and $c$ combined & $d$ & $e$ & $f$ and $g$ combined & Oper. in $x_{1}-$pos. & Oper. in $x_{2}-$pos. & Oper. in $x_{1}-$pos. after $a$ & Oper. in $x_{2}-$pos. after $a$ & Coeff. multiplier\\
\hline even & even & even & even & even & $\mathbbm{1}$ & $\mathbbm{1}$ & $0$ & $0$ & $0$\\
\hline odd & even & even & even & even & $\alpha_{1}$ & $\mathbbm{1}$ & $0$ & $0$ & $0$\\
\hline even & odd & even & even & even & $\alpha_{2}$ & $\mathbbm{1}$ & $\alpha_{3}$ & $\alpha_{2}$ & $2$\\
\hline even & even & odd & even & even & $\alpha_{1}$ & $\mathbbm{1}$ & $0$ & $0$ & $0$\\
\hline even & even & even & odd & even & $\mathbbm{1}$ & $\alpha_{2}$ & $0$ & $0$ & $0$\\
\hline even & even & even & even & odd & $\mathbbm{1}$ & $\alpha_{1}$ & $\alpha_{1}$ & $\alpha_{3}$ & $-2$\\
\hline odd & odd & even & even & even & $\alpha_{3}$ & $\alpha_{2}$ & $\alpha_{2}$ & $\mathbbm{1}$ & $2$\\
\hline odd & even & odd & even & even & $\mathbbm{1}$ & $\alpha_{2}$ & $0$ & $0$ & $0$\\
\hline odd & even & even & odd & even & $\alpha_{1}$ & $\mathbbm{1}$ & $0$ & $0$ & $0$\\
\hline odd & even & even & even & odd & $\alpha_{1}$ & $\alpha_{3}$ & $\mathbbm{1}$ & $\alpha_{1}$ & $-2$\\
\hline even & odd & odd & even & even & $\alpha_{3}$ & $\mathbbm{1}$ & $\alpha_{2}$ & $\alpha_{2}$ & $2$\\
\hline even & odd & even & odd & even & $\alpha_{2}$ & $\alpha_{2}$ & $\alpha_{3}$ & $\mathbbm{1}$ & $2$\\
\hline even & odd & even & even & odd & $\alpha_{2}$ & $\alpha_{1}$ & $0$ & $0$ & $0$\\
\hline even & even & odd & odd & even & $\alpha_{1}$ & $\alpha_{2}$ & $0$ & $0$ & $0$\\
\hline even & even & odd & even & odd & $\alpha_{1}$ & $\alpha_{1}$ & $\mathbbm{1}$ & $\alpha_{3}$ & $-2$\\
\hline even & even & even & odd & odd & $\mathbbm{1}$ & $\alpha_{3}$ & $\alpha_{1}$ & $\alpha_{1}$ & $-2$\\
\hline odd & odd & odd & even & even & $\alpha_{2}$ & $\alpha_{2}$ & $\alpha_{3}$ & $\mathbbm{1}$ & $2$\\
\hline odd & odd & even & odd & even & $\alpha_{3}$ & $\mathbbm{1}$ & $\alpha_{2}$ & $\alpha_{2}$ & $2$\\
\hline odd & odd & even & even & odd & $\alpha_{3}$ & $\alpha_{3}$ & $0$ & $0$ & $0$\\
\hline odd & even & odd & odd & even & $\mathbbm{1}$ & $\mathbbm{1}$ & $0$ & $0$ & $0$\\
\hline odd & even & odd & even & odd & $\mathbbm{1}$ & $\alpha_{3}$ & $\alpha_{1}$ & $\alpha_{1}$ & $-2$\\
\hline odd & even & even & odd & odd & $\alpha_{1}$ & $\alpha_{1}$ & $\mathbbm{1}$ & $\alpha_{3}$ & $-2$\\
\hline even & odd & odd & odd & even & $\alpha_{3}$ & $\alpha_{2}$ & $\alpha_{2}$ & $\mathbbm{1}$ & $2$\\
\hline even & odd & odd & even & odd & $\alpha_{3}$ & $\alpha_{1}$ & $0$ & $0$ & $0$\\
\hline even & odd & even & odd & odd & $\alpha_{2}$ & $\alpha_{3}$ & $0$ & $0$ & $0$\\
\hline even & even & odd & odd & odd & $\alpha_{1}$ & $\alpha_{3}$ & $\mathbbm{1}$ & $\alpha_{1}$ & $-2$\\
\hline odd & odd & odd & odd & even & $\alpha_{2}$ & $\mathbbm{1}$ & $\alpha_{3}$ & $\alpha_{2}$ & $2$\\
\hline odd & odd & odd & even & odd & $\alpha_{2}$ & $\alpha_{3}$ & $0$ & $0$ & $0$\\
\hline odd & odd & even & odd & odd & $\alpha_{3}$ & $\alpha_{1}$ & $0$ & $0$ & $0$\\
\hline odd & even & odd & odd & odd & $\mathbbm{1}$ & $\alpha_{1}$ & $\alpha_{1}$ & $\alpha_{3}$ & $-2$\\
\hline even & odd & odd & odd & odd & $\alpha_{3}$ & $\alpha_{3}$ & $0$ & $0$ & $0$\\
\hline odd & odd & odd & odd & odd & $\alpha_{2}$ & $\alpha_{1}$ & $0$ & $0$ & $0$\\
\hline
\end{tabular}
\label{opertable}
\end{table}

Note that the coefficient multiplier only depends on the parities of two columns: $b$ and $c$ combined and $f$ and $g$ combined. If they are the same, the multiplier is $0$. If they are different and $f$ and $g$ combined is even, then the multiplier is $2$. If they are different and $f$ and $g$ combined is odd, then the multiplier is $-2$. A completely analogous analysis can be made for edge $a$ in Figure~\ref{graph2}, and the results are the same as in Table~\ref{opertable}. (In this case, the endpoints of $a$ are $x_{1}$ and $x_{2}=x_{1}+(0, -1)$.)\\

\begin{figure}
\begin{center}
$
\hspace{1.5cm}
 \begin{psmatrix}[emnode=dot,colsep=1.5cm,rowsep=1.2cm]
       & & &\\
       & & &\\
       & & &\\
       & & &	
 \large
  \everypsbox{\scriptstyle}
  \ncline[linewidth=.5pt,linestyle=dotted]{1,1}{1,2}
  \ncline[linewidth=.5pt,linestyle=dotted]{1,2}{1,3}
  \ncline[linewidth=.5pt,linestyle=dotted]{1,1}{2,1}
  \ncline[linewidth=.5pt]{1,2}{2,2}<{c}
  \ncline[linewidth=.5pt,linestyle=dotted]{1,3}{2,3}
  \ncline[linewidth=.5pt]{2,1}{2,2}^{b}
  \ncline[linewidth=.5pt]{2,2}{2,3}^{d}
  \ncline[linewidth=.5pt,linestyle=dotted]{2,1}{3,1}
  \ncline[linewidth=.5pt]{2,2}{3,2}>{a}
  \ncline[linewidth=.5pt,linestyle=dotted]{2,3}{3,3}
  \ncline[linewidth=.5pt]{3,1}{3,2}_{e}
  \ncline[linewidth=.5pt]{3,2}{3,3}_{g}
  \ncline[linewidth=.5pt,linestyle=dotted]{3,1}{4,1}
  \ncline[linewidth=.5pt]{3,2}{4,2}<{f}
  \ncline[linewidth=.5pt,linestyle=dotted]{3,3}{4,3}
  \ncline[linewidth=.5pt,linestyle=dotted]{4,1}{4,2}
  \ncline[linewidth=.5pt,linestyle=dotted]{4,2}{4,3}
 \end{psmatrix}
$
\caption{}
\label{graph2}
\end{center}
\end{figure}

\begin{lemma} If a commutator sequence of length $n$ in $\Lambda$ results in an operator that is a (non-zero) scalar multiple of the operator $B_{j}$, then it is a positive scalar multiple of $B_{j}$ and $n-n_{j}$ is an even integer.
\end{lemma}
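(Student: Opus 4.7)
The plan is to track how Pauli strings and accumulated signs evolve under the commutator sequence. First I would verify the following normal form: any nonzero $k$-step iterated commutator of $A$ is $\epsilon_k\cdot 2^k\bigotimes_v \alpha_{f_k(v)}$ with $\epsilon_k\in\{\pm 1\}$. A commutator step at an edge $\{u,w\}$, using the convention that $H_{(a,b),(a+1,b)}$ places $\alpha_1$ at $(a,b)$ and $\alpha_2$ at $(a+1,b)$ (so that $(a,b)$ is the \emph{parent} $u$ and $(a+1,b)$ is the \emph{child} $w$, with an analogous convention for vertical edges), left-multiplies the Pauli at $u$ by $\alpha_1$ and the Pauli at $w$ by $\alpha_2$. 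By \eqref{alpharel}--\eqref{alpharel2} the step incurs a sign $-1$ precisely when the pre-application Pauli at the child lies in $\{\alpha_1,\alpha_3\}$, and $+1$ otherwise.

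For the parity claim $n\equiv n_j\pmod 2$, encode each $\alpha_i$ as a pair $(x_v,z_v)\in\{0,1\}^2$ via $\alpha_i=\alpha_2^{x_v}\alpha_1^{z_v}$ up to sign. A parent-use of $v$ toggles $z_v$ while a child-use toggles $x_v$. Writing $p_v,c_v$ for the total parent- and child-uses of $v$ across the sequence, one obtains $(x_v,z_v)\equiv(x_v(0)+c_v,\,z_v(0)+p_v)\pmod 2$ for the final Pauli. Matching $f_j$ at every site then pins down $(p_v\bmod 2,\,c_v\bmod 2)$; in particular $c_v$ is odd iff $v\in \tilde T_j\setminus\{(0,0)\}$ and even otherwise. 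Summing $n=\sum_v c_v$ modulo $2$, the only odd contributions come from the $|\tilde T_j|-1=n_j$ non-origin tree vertices, so $n\equiv n_j\pmod 2$.

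For the positivity of the scalar, the accumulated sign is $S=(-1)^m$ with $m=\sum_k z_{w_k}(k-1)$; combinatorially this counts, modulo $2$, the pairs $(j,k)$ with $j<k$ for which the parent of $e_j$ coincides with the child of $e_k$. Summing the nonzero-commutator constraint $x_{u_k}(k-1)+z_{w_k}(k-1)\equiv 1\pmod 2$ over $k$ gives $m+m'\equiv n+p_0\pmod 2$, where $m'$ is the analogous child-then-parent count; combined with the parities of the previous paragraph this already determines $m+m'$. The main obstacle is decoupling $m$ from $m'$ to conclude $m\equiv 0$. My plan is a vertex-by-vertex analysis: each $v$ contributes to $m$ the number $N_v$ of parent-then-child pairs within $v$'s local use-subsequence. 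Exploiting that the first child-use of any $v$ necessarily occurs while $z_v=0$ and so contributes $+1$, and then bookkeeping subsequent uses using the rigid parities of $(p_v,c_v)$ together with the parent-at-upper-or-left orientation of the Hamiltonian, I expect one can force $\sum_v N_v$ to be even whenever the final operator is a nonzero multiple of $B_j$.
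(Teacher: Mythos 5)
Your symplectic bookkeeping is set up correctly: with the parent-at-left-or-top orientation a step at $\{u,w\}$ left-multiplies by $\alpha_1$ at $u$ and $\alpha_2$ at $w$, the sign is $(-1)^{z_w}$, and the nonzero condition is $x_u+z_w\equiv 1$. Your proof of the parity claim $n\equiv n_j\pmod 2$ is complete and is in fact cleaner than the paper's: every step contributes exactly one child-use, the final value of $x_v$ forces $c_v$ to be odd precisely at the $n_j$ non-origin vertices of $\tilde T_j$, and summing gives the result. The paper instead extracts both conclusions simultaneously from a single local decomposition, so this part of your argument is a genuinely different (and shorter) route.

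The positivity claim, however, is not proved; you say so yourself, and the lever you propose to close the gap does not exist. First, the auxiliary fact ``the first child-use of any $v$ necessarily occurs while $z_v=0$'' is false: a vertex can be used as a parent before it is ever used as a child (e.g.\ the sequence $\{(-1,0),(0,0)\}$ then $\{(-2,0),(-1,0)\}$ uses $(-1,0)$ first as a parent, so its first child-use carries sign $-1$ and the commutators are nonzero throughout). Second, the per-vertex counts $N_v$ are not determined mod $2$ by the endpoint parities $(p_v,c_v)\bmod 2$ alone --- for $p_v=2$, $c_v=1$ the interleavings $PPC$, $PCP$, $CPP$ give $N_v=2,1,0$ --- so no vertex-by-vertex argument using only those parities can succeed. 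The decoupling you are missing is the paper's key move: apply the nonzero constraint \emph{at each step}, rewriting the sign of a parent-use of $v$ as $(-1)^{1+x_v}$, i.e.\ as a function of the prior \emph{child}-uses of $v$. Then, grouping the four edges around each vertex with matching coordinate parities into a ``cross,'' every step's sign becomes a function of that cross's own subsequence, and the product of signs over a cross is determined by the alternation pattern of its parent-uses versus child-uses. Reducing to alternating blocks of odd length and using the final-state parities (which the table of cases in the paper enumerates), each cross contributes $+1$, whence the total scalar is positive. Without this local use of the constraint, your global identity $m+m'\equiv n+p_0$ cannot separate $m$ from $m'$, and the argument stalls exactly where you say it does.
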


\begin{proof}
 We first observe that each edge in the planar square lattice is contained in exactly one ``cross-shaped'' group of four edges that either all share a vertex with even coordinates or all share a vertex with odd coordinates. Let $(e_{1}, \ldots ,e_{n})$ be a commutator sequence that results in a non-zero operator. Now consider the ``cross-shaped'' group of four edges that all contain the origin and label them as shown in Figure~\ref{graph3}. To each appearance of an edge labelled $a$, $b$, $c$, or $d$ (or really any edge, for that matter) is associated a coefficient multiplier, which is either $2$ or $-2$. We would like to show that the product of the coeffiecient multipliers associated to the appearances of the edges labelled $a$, $b$, $c$, or $d$ is positive.\\

\begin{figure}
\begin{center}
$
\hspace{1.5cm}
 \begin{psmatrix}[emnode=dot,colsep=1.5cm,rowsep=1.2cm]
       & & &\\
       & & &\\
       & & &
 \large
  \everypsbox{\scriptstyle}
  \ncline[linewidth=.5pt,linestyle=dotted]{1,1}{1,2}
  \ncline[linewidth=.5pt,linestyle=dotted]{1,2}{1,3}
  \ncline[linewidth=.5pt,linestyle=dotted]{1,1}{2,1}
  \ncline[linewidth=.5pt]{1,2}{2,2}<{b}
  \ncline[linewidth=.5pt,linestyle=dotted]{1,3}{2,3}
  \ncline[linewidth=.5pt]{2,1}{2,2}_{c}
  \ncline[linewidth=.5pt]{2,2}{2,3}^{a}
  \ncline[linewidth=.5pt,linestyle=dotted]{2,1}{3,1}
  \ncline[linewidth=.5pt]{2,2}{3,2}>{d}
  \ncline[linewidth=.5pt,linestyle=dotted]{2,3}{3,3}
  \ncline[linewidth=.5pt,linestyle=dotted]{3,1}{3,2}
  \ncline[linewidth=.5pt,linestyle=dotted]{3,2}{3,3}
 \end{psmatrix}
$
\caption{}
\label{graph3}
\end{center}
\end{figure}

Note that the results collected in the previous table imply that the value of the multiplicative constant associated with the appearance of an edge labelled $a$ or $d$ depends only on the parity of the combined number of previous appearances of $b$ and $c$. (However, because our group of edges is centered around the origin, and because we are commuting with an operator $A$ which is an $\alpha_{2}$ supported at the origin, we must add one to the number of combined appearances of $b$ and $c$ when calculating the value of the multiplicative constant associated with the appearance of an $a$ or $d$.) Therefore, we may restrict ourselves to the subsequence of $(e_{1}, \ldots ,e_{n})$ formed by removing all $e_{k}$ which are not equal to $a$, $b$, $c$, or $d$.\\

Since $B_{j}$ has an $\alpha_{3}$ at the origin, $b$ and $c$ must combined appear an even number of times and $a$ and $d$ must combined appear an odd number of times in the commutator sequence $(e_{1}, \ldots ,e_{n})$ (hence in the subsequence described above). Otherwise, the associated operator will differ from $B_{j}$ at the origin. We claim that out of all of these appearances of $a$, $b$, $c$, and $d$ an even number of them will introduce a negative coefficient multiplier.\\

Our subsequence consists of alternating $b \, c$ strings and $a \, d$ strings. An example might be

\begin{equation*}
 \underbrace{a \, a \, d \, a} \, \underbrace{b \, b \, c} \, \underbrace{a \, a} \, \underbrace{c \, b \, c} \, \underbrace{a \, d \, d \, a \, d \, d} \, \underbrace{c \, c \, c \, b \, b} \, .
\end{equation*}
Suppose our subsequence has an $a \, d$ string of even length. Since the sign of the coefficient multiplier introduced by the appearance of each $a$ or $d$ depends only on the parity of the number of combined previous appearances of $b$ and $c$, each appearance of $a$ or $d$ in this string will introduce the same coefficient multiplier. Since this $a \, d$ string is of even length, the product of the coefficient multipliers associated to the edges in this string will be positive. Further, this string will not affect the coefficient multipliers associated to any other edges in the subsequence. Thus, for analyzing the parity of the product of coefficient multipliers in our $a \, b \, c \, d$ subsequence, we may remove $a \, d$ and $b \, c$ strings of even length. This process may be repeated until either we have no more entries in our subsequence (in which case the product of the coefficient multipliers is positive), or we have a reduced subsequence consisting of alternating strings of odd length.\\

For ease of notation we will represent an $a \, d$ string of odd length by an $a$ and a $b \, c$ string of odd length by a $b$. If an $a \, d$ string is first in our (reduced) subsequence, then the subsequence is of the form

\begin{center}
\begin{tabular}{cccccccccc}
$a$ & $b$ & $a$ & $b$ & $a$ & \dots & $b$ & $a$ & $b$ & $a$
\end{tabular}
\end{center}
where $a$ appears an odd number of times and $b$ appears an even number of times. Because our observable $A$ is an $\alpha_{2}$ supported at the origin, the appearance of an $a$ or $d$ will introduce a positive coefficient multiplier when $b$ and $c$ have combined already appeared an \textit{even} number of times. So, the signs of the coefficient multipliers are\\
\\

\begin{center}
\begin{tabular}{cccccccccc}
$a$ & $b$ & $a$ & $b$ & $a$ & \dots & $b$ & $a$ & $b$ & $a$\\
$+$ & $-$ & $-$ & $+$ & $+$ & \dots & $-$ & $-$ & $+$ & $+$
\end{tabular}
\end{center}

Similarly, if a $b$-$c$ string appears first, then we have

\begin{center}
\begin{tabular}{cccccccccc}
$b$ & $a$ & $b$ & $a$ & $b$ & \dots & $a$ & $b$ & $a$ & $b$\\
$+$ & $-$ & $-$ & $+$ & $+$ & \dots & $+$ & $+$ & $-$ & $-$
\end{tabular}
\end{center}

Thus, negative coefficient multipliers associated with edges labelled $a$, $b$, $c$, or $d$ will combined appear an even number of times. We also note that in any commutator sequence in which the edges of $\tilde{T_{j}}$ each appear exactly once and such that the resulting operator is non-zero (that is, so that we obtain $B_{j}$), edge $a$ appears exactly once and edges $b$, $c$, and $d$ do not appear at all. In the commutator sequence $(e_{1}, \ldots ,e_{n})$ that we are considering, edges $a$, $b$, $c$ and $d$ combined appear an odd number of times. We will need this observation later when we want to show that $n-n{j}$ is even.\\

By construction, in the lattice tree $\tilde{T_{j}}$, degree $1$ and degree $3$ vertices are always an even number of edges away from the next nearest degree $1$ or degree $3$ vertex. Similarly, vertices which are degree $2$ and at which a ``turn'' occurs are always separated from the nearest degree $1$ or degree $3$ vertex by an even number of edges. So, when we cover the planar square lattice with the cross-shapes described above, seven additional cases will occur (with the edges in the group labelled as in Figure~\ref{graph3}).
\begin{enumerate}
 \item $a$ and $c$ are in $\tilde{T_{j}}$, and $b$ and $d$ are not.
 \item $c$ and $d$ are in $\tilde{T_{j}}$, and $c$ and $d$ are not.
 \item $a$ and $b$ are in $\tilde{T_{j}}$, and $c$ and $d$ are not.
 \item $a$, $b$, and $d$ are in $\tilde{T_{j}}$, but $c$ is not.
 \item $a$, $c$, and $d$ are in $\tilde{T_{j}}$, but $b$ is not.
 \item $c$ is in $\tilde{T_{j}}$, but $a$, $b$, and $d$ are not.
 \item $a$, $b$, $c$ and $d$ are not in $\tilde{T_{j}}$.
\end{enumerate}

We can analyze each of these cases in exactly the same way we analyzed the group of four edges that contain the origin. In each case we can first reduce the commutator sequence $(e_{1}, \ldots ,e_{n})$ to the subsequence containing just the edges in the cross-shaped group. Then we can further reduce the subsequence until we have alternating $a$-$d$ and $b$-$c$ strings of odd length.\\

\textit{Case 1}. In this case, the shared vertex has an $\alpha_{3}$ associated with it. Thus, $b$ and $c$ combined must appear an odd number of times, and $a$ and $d$ combined must also appear an odd number of times. Using the same notation as above for $a$-$d$ strings and $b$-$c$ strings, the signs of the coefficient multipliers associated with the appearance of each edge in the (reduced) subsequence are

\begin{center}
\begin{tabular}{cccccccccc}
$a$ & $b$ & $a$ & $b$ & \dots & $a$ & $b$\\
$-$ & $-$ & $+$ & $+$ & \dots & $-$ & $-$
\end{tabular}
\end{center}

\begin{center}
 or
\end{center}

\begin{center}
\begin{tabular}{cccccccccc}
$b$ & $a$ & $b$ & $a$ & \dots & $b$ & $a$\\
$+$ & $+$ & $-$ & $-$ & \dots & $+$ & $+$
\end{tabular}
\end{center}
Thus, negative coefficient multipliers associated with edges labelled $a$, $b$, $c$, or $d$ will combined appear an even number of times. We also note that $a$, $b$, $c$ and $d$ combined appear an even number of times in $e_{1}, \ldots ,e_{n}$.\\
\\

\textit{Case 2}. In this case the shared vertex is an $\alpha_{3}$. The analysis is just like Case 1.\\
\\

\textit{Case 3}. Just like Cases 1 and 2.\\
\\

\textit{Case 4}. In this case the shared vertex is an $\alpha_{2}$. Edges $a$ and $d$ must combined appear an even number of times and $b$ and $c$ must combined appear an odd number of times. The signs of the coefficient multipliers associated with the appearance of each edge in the (reduced) subsequence are

\begin{center}
\begin{tabular}{cccccccccc}
$a$ & $b$ & $a$ & $b$ & $a$ & $b$ & \dots & $a$ & $b$ & $a$\\
$-$ & $-$ & $+$ & $+$ & $-$ & $-$ & \dots & $-$ & $-$ & $+$
\end{tabular}
\end{center}

\begin{center}
 or
\end{center}

\begin{center}
\begin{tabular}{cccccccccc}
$b$ & $a$ & $b$ & $a$ & $b$ & $a$ & \dots & $b$ & $a$ & $b$\\
$+$ & $+$ & $-$ & $-$ & $+$ & $+$ & \dots & $-$ & $-$ & $+$
\end{tabular}
\end{center}
Thus, negative coefficient multipliers associated with edges labelled $a$, $b$, $c$, or $d$ will combined appear an even number of times. We also note that $a$, $b$, $c$ and $d$ combined appear an odd number of times in $e_{1}, \ldots ,e_{n}$.\\
\\

\textit{Case 5}. Completely analogous to Case 4.\\
\\

\textit{Case 6}. In this case the shared vertex is also an $\alpha_{2}$, and the rest of the analysis is just like the previous two cases.\\
\\

\textit{Case 7}. In this case the shared vertex is an $\mathbbm{1}$. Edges $a$ and $d$ must combined appear an even number of times and $b$ and $c$ must also combined appear an even number of times. The signs of the coefficient multipliers associated with the appearance of each edge in the (reduced) subsequence are

\begin{center}
\begin{tabular}{cccccccccc}
$a$ & $b$ & $a$ & $b$ & \dots & $a$ & $b$\\
$-$ & $-$ & $+$ & $+$ & \dots & $-$ & $-$
\end{tabular}
\end{center}

\begin{center}
 or
\end{center}

\begin{center}
\begin{tabular}{cccccccccc}
$b$ & $a$ & $b$ & $a$ & \dots & $b$ & $a$\\
$+$ & $+$ & $-$ & $-$ & \dots & $+$ & $+$
\end{tabular}
\end{center}
Again, negative coefficient multipliers associated with edges labelled $a$, $b$, $c$, or $d$ will combined appear an even number of times, and the total number of appearances of the edges labelled $a$, $b$, $c$, or $d$ will be even. The lemma follows.

\end{proof}

\begin{MT}
 Let an interaction on the planar square lattice be as described above, and let $\{ \Lambda_{j} \}$ be an increasing sequence of simply-connected ``square-shaped'' subsets of $\mathbb{Z}^{2}$ such that $\tilde{T_{j}} \subset \Lambda_{j}$. Also let $A=\alpha_{2}$ supported at the origin. Then,

\begin{equation}
 \displaystyle{\lim_{j \to \infty}   \left\lVert  e^{izH_{\Lambda_{j}}}Ae^{-izH_{\Lambda_{j}}}   \right\rVert   = \infty}
\end{equation}
for $z$ purely imaginary and $|z| > 4^{21}$.
\end{MT}

\begin{proof}
 We begin by writing

\begin{equation}
 e^{izH_{\Lambda_{j}}}Ae^{-izH_{\Lambda_{j}}}=\sum_{n=0}^{\infty}\frac{z^{n}}{n!}\mathcal{C}_{\Lambda}^{n}(A) \, .
\end{equation}
For any $n$, $\mathcal{C}_{\Lambda}^{n}(A)$ can be written as a (finite) sum of non-zero ``iterated commutators''. Each of these non-zero iterated commutators is an operator of the form $\displaystyle{c^{f}\otimes_{x \in \Lambda_{j}} \alpha_{f(x)}}$, for some $f:\Lambda_{j} \rightarrow \{ 0,1,2,3 \}$. In the previous lemma we showed that if an $n-$times iterated commutator results in an operator of the above form with $f=f_{j}$, then $\displaystyle{c^{f}\otimes_{x \in \Lambda_{j}} \alpha_{f(x)}}$ is a positive scalar multiple of $B_{j}$ and $n-n_{j}$ is even. Noting that $\{ \otimes_{x \in \Lambda_{j}} \alpha_{f(x)} | f:\Lambda_{j} \rightarrow \{ 0,1,2,3 \} \}$ is an orthogonal (but not orthonormal!) basis for the vector space of linear operators on $\otimes_{x \in \Lambda_{j}} \mathbb{C}^{2}$ with the Hilbert-Schmidt inner product, we can write

\begin{equation}
 \displaystyle{e^{izH_{\Lambda_{j}}}Ae^{-izH_{\Lambda_{j}}}=\sum_{n=0}^{\infty}\frac{z^{n}}{n!}\mathcal{C}_{\Lambda}^{n}(A)=\sum_{f:\Lambda_{j} \rightarrow \{ 0,1,2,3 \} } a_{\Lambda_{j}}^{f} \otimes_{x \in \Lambda_{j}} \alpha_{f(x)}} \, .
\end{equation}

Then,
\begin{equation}
 |a_{\Lambda_{j}}^{f_{j}}| \geq \frac{|z|^{n_{j}}}{n_{j}!} \cdotp \frac{n_{j}!}{(4^{21})^{n_{j}}} \cdotp 2^{n_{j}}= \left(\frac{2|z|}{4^{21}} \right)^{n_{j}} \, .
\end{equation}

Since $  \left\lVert  e^{izH_{\Lambda_{j}}}Ae^{-izH_{\Lambda_{j}}}   \right\rVert   \geq |a_{\Lambda_{j}}^{f_{j}}|$, the result follows by letting $j$ go to $\infty$.
\end{proof}

\section{Acknowledgements}
I wish to thank my advisor, Eric Carlen, for many helpful discussions during this project. The idea for the sequence of trees that can be constructed in ``too many" ways was suggested by J\'{o}zsef Beck after several conversations about lattice trees. My fellow graduate student Elizabeth Kupin is responsible for the beautiful lemma on the number of ways of ``constructing" a rooted tree. I am also thankful for helpful conversations with David Ruelle and Giovanni Gallavotti.

\bibliographystyle{amsplain}

\end{document}